

\documentclass[onecolumn,10.5pt]{IEEEtran}

\usepackage{amsfonts,amsmath,amssymb,amsthm,caption}
\usepackage{amsbsy}
\usepackage{graphicx,multirow,bm}
\usepackage{color}
\usepackage{braket}
\usepackage{geometry} \geometry{margin=2.5cm}
\captionsetup[table]{labelsep=space}
\parskip 2pt
\makeatletter
\theoremstyle{mythm}
\newtheorem{theorem}{Theorem}
\newtheorem{example}[theorem]{Example}
\newtheorem{definition}[theorem]{Definition}
\newtheorem{remark}[theorem]{Remark}

\newtheorem{corollary}[theorem]{Corollary}
\newtheorem{lemma}[theorem]{Lemma}

\newtheorem{problem}[theorem]{Problem}

\allowdisplaybreaks[4]

\begin{document}

\title{ 
On the Information-theoretic Security of Combinatorial All-or-nothing Transforms
\vskip 0.5cm
}
\author{
 Yujie Gu,  
 Sonata Akao,
 Navid Nasr Esfahani, 
 Ying Miao,
 and Kouichi Sakurai 
\thanks{Y. Gu, S. Akao and K. Sakurai are with the Graduate School and Faculty of Information Science and Electrical Engineering, Kyushu University, Fukuoka, Japan. (e-mails: gu@inf.kyushu-u.ac.jp, akao.sonata.598@s.kyushu-u.ac.jp, sakurai@inf.kyushu-u.ac.jp)}
\thanks{N. Esfahani is with the David R. Cheriton School of Computer Science, University of Waterloo, Waterloo, Ontario, N2L 3G1, Canada. (e-mail: nnasresf@uwaterloo.ca)}
\thanks{Y. Miao is with the Faculty of Engineering, Information and Systems, University of Tsukuba, Tsukuba, Ibaraki 305-8573, Japan. (e-mail: miao@sk.tsukuba.ac.jp)}
\thanks{This work has been supported by JSPS Grant-in-Aid for Early-Career Scientists No. 21K13830 and JSPS Grant-in-Aid for Scientific Research (B) No. 18H01133.
}\\[0.5cm]
}

\maketitle

\begin{abstract}
All-or-nothing transforms (AONT) were proposed by Rivest as a message preprocessing technique for encrypting data to protect against brute-force attacks, and have numerous applications in cryptography and information security.  
Later the unconditionally secure AONT and their combinatorial characterization were introduced by Stinson. 
Informally, a combinatorial AONT is an array with the unbiased requirements and its security properties in general depend on the prior probability distribution on the inputs $s$-tuples. 
Recently, it was shown by Esfahani and Stinson that a combinatorial AONT has perfect security provided that all the inputs $s$-tuples are equiprobable, and has weak security provided that all the inputs $s$-tuples are with non-zero probability. 

This paper aims to explore on the gap between perfect security and weak security for combinatorial $(t,s,v)$-AONTs. 
Concretely, 
we consider the typical scenario that all the $s$ inputs take values independently (but not necessarily identically) and quantify the amount of information $H(\mathcal{X}|\mathcal{Y})$ about any $t$ inputs $\mathcal{X}$ that is not revealed by any $s-t$ outputs $\mathcal{Y}$. 
In particular, we establish the general lower and upper bounds on $H(\mathcal{X}|\mathcal{Y})$ for combinatorial AONTs using information-theoretic techniques, and also show that the derived bounds can be attained in certain cases. 
Furthermore, 
the discussions are extended for the security properties of combinatorial asymmetric AONTs. 
\end{abstract}

\section{Introduction}
\label{sec:Introduction}

The concept of an all-or-nothing transform (AONT) was introduced by Rivest \cite{Rivest1997}, as a strongly non-separable mode of operation, that is a preprocessing step prior to encryption such that missing any cipher-block prevents the attacker from obtaining information about the message-blocks. The original motivation behind AONTs was to impede brute-force attacks on block ciphers when the key length cannot be increased \cite{Rivest1997}. Since then, numerous applications and extensions of AONT have been studied and introduced within different context, e.g., cryptography, information security, and combinatorics \cite{Boyko1999, CDHKS2000, Desai2000, DSS2001, E2021, KSLC2019, SNS}. 
Informally, an AONT is an unkeyed, invertible transformation which maps a sequence of inputs $(x_1,x_2,\ldots,x_s)$ to a sequence of outputs $(y_1,y_2,\ldots,y_s)$ with the following properties:
\begin{enumerate}
    \item[(i)]
    given all $(y_1,y_2,\ldots,y_s)$, it is easy to compute $(x_1,x_2,\ldots,x_s)$; 
    \item[(ii)] 
    if any one of the $y_j$ is missing, then it is computationally infeasible to obtain any information about any $x_i$. 
\end{enumerate}
In contrast to the above computationally secure AONT, Stinson~\cite{Stinson2001} introduced the unconditionally secure AONT, which later was extended to the general scenario~\cite{DES2016,EGS2018} where more than one $y_j$ could be missing. Here we expose the definition of unconditionally secure AONT in terms of the entropy function $H(\cdot)$ in~\cite{DES2016}.

\begin{definition}\label{def-AONT}\rm
Let $X_1,\ldots, X_s$ and $Y_1, \ldots, Y_s$ be input and output random variables respectively, which take values from the finite set $\Gamma$ of size $v$. 
These $2s$ random variables define a \textit{$(t, s, v)$-AONT} provided that the following conditions are satisfied: 
\begin{enumerate}
    \item[1)] $H(Y_1,\ldots,Y_s| X_1,\ldots,X_s) =0$;
    \item[2)] $H(X_1,\ldots,X_s| Y_1,\ldots,Y_s) =0$;
    \item[3)] For all $\mathcal{X} \subseteq \{X_1,\ldots,X_s\}$ with $|\mathcal{X}|=t$, and for all $\mathcal{Y} \subseteq \{Y_1,\ldots,Y_s\}$ with $|\mathcal{Y}|=s-t$, it holds that 
    \begin{align}\label{eq-perfect-security}
    H(\mathcal{X}|\mathcal{Y}) = H(\mathcal{X}).    
    \end{align}
\end{enumerate}
\end{definition}

Note that the items 1) and 2) in the above definition imply a one-to-one correspondence between inputs $X_1,\ldots, X_s$ and outputs $Y_1,\ldots,Y_s$. The item 3) guarantees the security property that no information about any $t$ inputs can be learned from any $s-t$ outputs, termed as the \textit{perfect security}~\cite{ES2021+}. 
\footnote{We remark that the entropy-based perfect security $H(X|Y)=H(X)$ straightforwardly results the statistical distance $SD(P_{X|Y=y},P_X)=0$. In fact, 
$H(X|Y)=H(X) \iff $ $X$ and $Y$ are independent $\iff$ $Pr(x|y) = Pr(x)$ holds for all $x$ and $y$ $\iff$ the statistical distance between $P_{X|Y=y}$ and $P_X$, for all $y$, is $0$.}

\subsection{Combinatorial AONT}

In the meanwhile, the notion of combinatorial all-or-nothing transforms was proposed~\cite{DES2016,Stinson2001}.
We first recall some preliminary definitions. 
An $(N, K, v)$-array $A$ is an $N$ by $K$ array, whose entries are chosen from an alphabet $\Gamma$ of order $v$.  
Let $I\subseteq [K]=\{1,2,\ldots,K\}$, and $A_{I}$ denote the array obtained from $A$ by deleting all the columns indexed by $c\in [K]\setminus I$. 
We say that $A$ is \textit{unbiased} with respect to $I$ if the rows of $A_{I}$ contain every $|I|$-tuple in $\Gamma^{|I|}$ exactly $\frac{N}{v^{|I|}}$ times. 
Based on the unbiased property, the combinatorial AONT is defined as follows. 

\begin{definition}\label{def-combinatorial-AONT}
A \textit{combinatorial $(t,s,v)$-AONT} is a $(v^s, 2s,v)$-array $A$ with columns labeled $1,\ldots, 2s$, which is unbiased with respect to the following subsets of columns: 
\begin{enumerate}
    \item[(1)] $\{1,\ldots, s\}$;
    \item[(2)] $\{s+1,\ldots, 2s\}$;
    \item[(3)] $I\cup J$, for all $I\subseteq \{1,\ldots, s\}$ with $|I|=t$ and all $J\subseteq \{s+1,\ldots, 2s\}$ with $|J|=s-t$. 
\end{enumerate}
\end{definition}

The existence and constructions of combinatorial AONT have been extensively investigated, see~\cite{DES2016,E2021,EGS2018,ES2017,Stinson2001,WCJ,ZZWG} for example. 

To see the connections between the AONT (based on $X_1,\ldots,X_s, Y_1,\ldots,Y_s$) in Definition~\ref{def-AONT} and the combinatorial AONT (based on array $A$) in Definition~\ref{def-combinatorial-AONT}, we can think of the first $s$ columns of $A$ as the inputs $X_1,\ldots,X_s$ and the last $s$ columns of $A$ as the outputs $Y_1,\ldots,Y_s$. 
(In what follows, we also refer to the $i$th column of $A$ as the input $X_i$ and the $(s+i)$th column of $A$ as the output $Y_i$ for $1\le i\le s$ when it is clear from the context.) 
Accordingly, the properties 1) and 2) in Definition~\ref{def-AONT} are equivalent to the properties (1) and (2) in Definition~\ref{def-combinatorial-AONT} in the sense that each of them implies a one-to-one correspondence between inputs and outputs. 
However, in contrast to the perfect security in the property 3) of Definition~\ref{def-AONT}, the item (3) in Definition~\ref{def-combinatorial-AONT} only ensures that knowledge of any $s-t$ outputs does not rule out any possible values for any $t$ inputs, 
which is called the \textit{weak security} in~\cite{ES2021+}. 


It is readily seen that there is a gap between the perfect security and the weak security. Indeed, as pointed out in~\cite{ES2021+}, 
the entropy-based Definition~\ref{def-AONT}   involves the ``security'' of an AONT, while the combinatorial AONT in Definition~\ref{def-combinatorial-AONT} is just a certain mathematical structure and its security properties in general depend on the underlying (prior) probability distribution on the possible inputs. 
Also notice that the input probability distributions together with the combinatorial AONT array induce a probability distribution on the outputs. 
Naturally, the following problem arises. 

\begin{problem}\label{def-question}
What are the security properties of combinatorial AONTs for given (prior) probability distributions on the inputs? 
\end{problem}

 Esfahani and Stinson \cite{ES2021+} provided answers to Problem~\ref{def-question} in the following cases. 
\begin{theorem}\cite[Theorems 2.1 and 2.3]{ES2021+} 
\begin{enumerate}
    \item[(1)] A combinatorial $(t,s,v)$-AONT has weak security provided that all the input $s$-tuples have non-zero probability. 
    \item[(2)] A combinatorial $(t,s,v)$-AONT has perfect security if and only if all the input $s$-tuples are equally probable, i.e., each with probability $1/v^s$. 
\end{enumerate}
\end{theorem}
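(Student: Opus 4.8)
The plan is to handle the three assertions in turn, using throughout two bijections that come directly from the unbiasedness of $A$. First, sending a row to its first $s$ entries is a bijection between the $v^{s}$ rows and $\Gamma^{s}$ (unbiasedness on the input columns), so I index a row by its input $s$-tuple $x$, write $p(x)$ for its probability and $\phi(x)$ for its output $s$-tuple. Second, for any $t$-subset $I$ of the input columns and any $(s-t)$-subset $J$ of the output columns, unbiasedness on $I\cup J$ together with $v^{s}/v^{s}=1$ shows that $x\mapsto(x|_{I},\phi(x)|_{J})$ is a bijection $\Gamma^{s}\to\Gamma^{t}\times\Gamma^{s-t}$; I write $r(x_{I},y_{J})$ for the unique row realizing a prescribed pair $(x_{I},y_{J})$. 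I assume $1\le t\le s-1$ throughout, since for $t\in\{0,s\}$ the perfect-security requirement $H(\mathcal X\mid\mathcal Y)=H(\mathcal X)$ is vacuous.

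For weak security, fix admissible $I,J$ and a pair $(x_{I},y_{J})$ with $\Pr(\mathcal Y_{J}=y_{J})>0$ (under full support this holds for every $y_{J}$). The event $\{\mathcal X_{I}=x_{I},\ \mathcal Y_{J}=y_{J}\}$ is realized by the single row $r(x_{I},y_{J})$, whose input $x^{*}$ satisfies $x^{*}|_{I}=x_{I}$; since $p(x^{*})>0$ by hypothesis, $\Pr(\mathcal X_{I}=x_{I}\mid\mathcal Y_{J}=y_{J})=p(x^{*})/\Pr(\mathcal Y_{J}=y_{J})>0$, which is weak security. For the ``if'' direction of (2), assume $p\equiv v^{-s}$; then $\Pr(\mathcal X_{I}=x_{I},\mathcal Y_{J}=y_{J})=p\bigl(r(x_{I},y_{J})\bigr)=v^{-s}$, while summing $p$ over the $v^{s-t}$ rows whose input agrees with $x_{I}$ on $I$ gives $\Pr(\mathcal X_{I}=x_{I})=v^{s-t}v^{-s}=v^{-t}$ and, symmetrically, $\Pr(\mathcal Y_{J}=y_{J})=v^{t}v^{-s}=v^{-(s-t)}$. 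The product equals $v^{-s}$, so $\mathcal X_{I}$ and $\mathcal Y_{J}$ are independent and $H(\mathcal X_{I}\mid\mathcal Y_{J})=H(\mathcal X_{I})$ for all admissible $I,J$.

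The ``only if'' direction of (2) is the substantive one. Independence of $\mathcal X_{I}$ and $\mathcal Y_{J}$ says that for the row with input $x$ and output $y=\phi(x)$,
\[
 p(x)=\Pr\!\bigl(\mathcal X_{I}=x|_{I}\bigr)\,\Pr\!\bigl(\mathcal Y_{J}=y|_{J}\bigr)\qquad\text{for all admissible }I,J.
\]
Fixing $J$ and letting $I$ vary, this identity shows that whenever $\Pr(\mathcal Y_{J}=y|_{J})>0$ (in particular whenever $p(x)>0$) the value $\Pr(\mathcal X_{I}=x|_{I})$ does not depend on $I$; call it $\alpha(x)$, and likewise set $\beta(y)=\Pr(\mathcal Y_{J}=y|_{J})$, so that $p(x)=\alpha(x)\,\beta(\phi(x))$ on the support of $p$. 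If two supported input tuples $x,x'$ agree in at least $t$ coordinates, then choosing $I$ inside their agreement set gives $\alpha(x)=\Pr(\mathcal X_{I}=x|_{I})=\Pr(\mathcal X_{I}=x'|_{I})=\alpha(x')$; and since $1\le t\le s-1$, the graph on $\Gamma^{s}$ joining two tuples that agree in at least $t$ coordinates is connected (move from $x$ to $x'$ changing one coordinate at a time, so that consecutive tuples agree in $s-1\ge t$ positions), with the same statement on the output side via $\phi$. Hence, once $p$ is known to be supported on all of $\Gamma^{s}$, both $\alpha$ and $\beta$ are constant, $p$ is constant, and $\sum_{x}p(x)=1$ forces $p\equiv v^{-s}$.

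The main obstacle is therefore to rule out proper supports. If $p(x^{*})=0$, the displayed identity forces, for every admissible $I,J$, that one of $\Pr(\mathcal X_{I}=x^{*}|_{I})$ and $\Pr(\mathcal Y_{J}=\phi(x^{*})|_{J})$ vanishes, and by unbiasedness each such vanishing annihilates a whole ``combinatorial line'' of rows (the $v^{s-t}$ rows with a given input pattern, or the $v^{t}$ rows with a given output pattern). I expect the real work to be iterating this, using all admissible $I$ and $J$ and the combinatorial structure of $A$ beyond what was needed above, until either a contradiction with $\sum_{x}p(x)=1$ emerges or the connectivity argument of the previous paragraph can be carried out directly inside the support of $p$ itself.
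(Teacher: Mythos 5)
Your treatment of part (1) and of the ``if'' direction of part (2) is correct: unbiasedness on $I\cup J$ does make $x\mapsto(x|_{I},\phi(x)|_{J})$ a bijection, which immediately gives positivity of every conditional probability under a full-support prior (weak security), and the counting $v^{-s}=v^{-t}\cdot v^{-(s-t)}$ gives independence of $\mathcal X_I$ and $\mathcal Y_J$ under the uniform prior. Note that the paper under review does not prove this theorem at all (it quotes it from \cite{ES2021+}; its own Theorem~\ref{thm-AONT-at-most-t-uniform} recovers only the ``if'' direction as the all-uniform special case), so these parts of your argument stand on their own and are essentially the standard counting proof.

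The gap is the ``only if'' direction of part (2), and you have correctly located but not closed it: your connectivity argument (which is fine, and does need $1\le t\le s-1$ as you assume) proves uniformity only \emph{after} full support of $p$ is known, and the final paragraph is a plan, not a proof. Moreover, the contradiction you hope to extract from a zero-probability row cannot be obtained from the stated hypotheses alone. With the entropy formulation of perfect security in Definition~\ref{def-AONT}, item 3), a distribution concentrated on a single row of $A$ satisfies $H(\mathcal X\mid\mathcal Y)=H(\mathcal X)=0$ for every admissible pair $(\mathcal X,\mathcal Y)$, since all the random variables are then constants; the input $s$-tuples are certainly not equiprobable, yet every one of your factorization identities $p(x)=\alpha(x)\beta(\phi(x))$ is satisfied with one factor vanishing on the dead rows, so ``iterating the vanishing lines'' terminates consistently and never contradicts $\sum_x p(x)=1$. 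In other words, the necessity direction is only true under a non-degeneracy assumption that your write-up never isolates --- e.g.\ that the condition $\Pr[\mathcal X=x\mid\mathcal Y=y]=\Pr[\mathcal X=x]$ is demanded for \emph{every} output pattern $y\in\Gamma^{s-t}$ (so every $y_J$ must occur with positive probability), or that all input $s$-tuples are a priori possible. A complete proof must first state that hypothesis and show it kills proper supports (e.g.\ a vanishing input-pattern marginal $\Pr(\mathcal X_I=x_I)=0$ combined with $\Pr(\mathcal Y_J=y_J)>0$ for all $y_J$ forces further whole lines of rows to vanish, eventually contradicting total probability $1$), and only then invoke your connectivity argument. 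As written, the proposal proves a weaker statement than the one claimed.
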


In addition to the aforementioned, the security properties of combinatorial AONTs are generally unknown.    
In this paper, we aim to explore on the gap between perfect security and weak security, as well as to provide more answers to Problem~\ref{def-question}. Concretely, we consider the typical scenario that all the $s$ inputs take values independently (but not necessarily identically) and quantify $H(\mathcal{X}|\mathcal{Y})$, i.e., the amount of information about any $t$ inputs $\mathcal{X}$ that is not revealed by any $s-t$ outputs $\mathcal{Y}$. In particular, we establish the general lower and upper bounds on $H(\mathcal{X}|\mathcal{Y})$ for combinatorial $(t,s,v)$-AONT (see Theorem~\ref{thm-main-symmetric}) by making use of information-theoretic methods. 
Among others, 
in contrast to the perfect security with $H(\mathcal{X}|\mathcal{Y})=H(\mathcal{X})$ and the weak security with  $0<H(\mathcal{X}|\mathcal{Y})\le H(\mathcal{X})$, we find an interesting phenomenon that for any $t$ inputs $\mathcal{X}$ and any $s-t$ outputs $\mathcal{Y}$, it holds that $$H(\mathcal{X}|\mathcal{Y})\le \min_{\mathcal{X}'}H(\mathcal{X}'),$$ 
where the min is taken over all the $t$ inputs set $\mathcal{X}'\subseteq \{X_1,\ldots,X_s\}$. 
It is also proven that this upper bound can be attained when there are at most $t$ non-uniform inputs (see Theorem~\ref{thm-AONT-at-most-t-uniform}). Some further discussions on the security properties of combinatorial $(t,s,v)$-AONTs in the case when $s$ inputs have partial dependency are also provided.

\subsection{Asymmetric AONT}
On the other hand, very recently, Esfahani and Stinson~\cite{ES2021+asy} generalized the $(t,s,v)$-AONT to the asymmetric $(t_i,t_o,s,v)$-AONT by replacing the parameter $t$ by two parameters $t_i$ and $t_o$ such that $t_i\le t_o$, which has practical applications in the secure distributed storage system~\cite{E2021,KSLC2019} as well.  
Here we formulate the unconditionally secure asymmetric AONT in terms of entropy functions. 
\begin{definition}\label{def-asy-AONT}\rm
Let $X_1,\ldots, X_s$ and $Y_1, \ldots, Y_s$ be input and output random variables respectively, which take values from the finite set $\Gamma$ of size $v$. Let $1\le t_i\le t_o\le s$. 
These $2s$ random variables define an \emph{asymmetric $(t_i,t_o, s, v)$-AONT} provided that the following conditions are satisfied: 
\begin{enumerate}
    \item[1)] $H(Y_1,\ldots,Y_s| X_1,\ldots,X_s) =0$;
    \item[2)] $H(X_1,\ldots,X_s| Y_1,\ldots,Y_s) =0$;
    \item[3)] For all $\mathcal{X} \subseteq \{X_1,\ldots,X_s\}$ with $|\mathcal{X}|=t_i$, and for all $\mathcal{Y} \subseteq \{Y_1,\ldots,Y_s\}$ with $|\mathcal{Y}|=s-t_o$, it holds that 
    \begin{align}\label{eq-perfect-security}
    H(\mathcal{X}|\mathcal{Y}) = H(\mathcal{X}).    
    \end{align}
\end{enumerate}
\end{definition}

The item 3) of the above definition implies the \emph{perfect security} property of an asymmetric $(t_i,t_o, s, v)$-AONT, i.e., knowledge of all but $t_o$ outputs leaves any $t_i$ inputs completely undetermined. 
Note that when $t_i=t_o$, the asymmetric $(t_i,t_o, s, v)$-AONT reduces to the (symmetric) $(t, s, v)$-AONT. Hence the asymmetric $(t_i,t_o, s, v)$-AONT is mainly considered for the case when $t_i<t_o$. 

A formulation of the combinatorial asymmetric AONT is presented in~\cite{ES2021+asy}. 
\begin{definition}\label{def-Asy-combinatorial-AONT}\rm 
A \textit{combinatorial asymmetric $(t_i,t_o,s,v)$-AONT} is a $(v^s, 2s,v)$-array $A$ with columns labeled $1,\ldots, 2s$, which is unbiased with respect to the following subsets of columns: 
\begin{enumerate}
    \item[(1)] $\{1,\ldots, s\}$;
    \item[(2)] $\{s+1,\ldots, 2s\}$;
    \item[(3)] $I\cup J$, for all $I\subseteq \{1,\ldots, s\}$ with $|I|=t_i$ and all $J\subseteq \{s+1,\ldots, 2s\}$ with $|J|=s-t_o$. 
\end{enumerate}
\end{definition} 
By Definitions~\ref{def-combinatorial-AONT} and \ref{def-Asy-combinatorial-AONT}, it is readily seen that a combinatorial $(t,s,v)$-AONT is a combinatorial asymmetric $(t_i,t,s,v)$-AONT for any $1\le t_i\le t$. 
The \emph{weak security} property of a combinatorial asymmetric $(t_i,t_o,s,v)$-AONT specifies that knowledge of any $s-t_o$ outputs does not rule out any possible values for any $t_i$ inputs. 
Notice that when $t_i<t_o$, the weak security property of a combinatorial asymmetric $(t_i,t,s,v)$-AONT does not necessarily require that all the input $s$-tuples are with positive probability. 
Inspired by this, Esfahani and Stinson~\cite{ES2021+asy} relaxed the requirements of unbiased property on combinatorial asymmetric AONT to the covering property, and then introduced the notion of combinatorial asymmetric weak-AONT. 

Let $A$ be an $(N, K, v)$-array over the alphabet $\Gamma$ of order $v$.   
Let $I\subseteq [K]=\{1,2,\ldots,K\}$ and $A_{I}$ be the array obtained from $A$ by deleting all the columns indexed by $c\in [K]\setminus I$. 
We say that $A$ is \textit{covering} with respect to $I$ if the rows of $A_{I}$ contain every $|I|$-tuple in $\Gamma^{|I|}$ at least once.  
\begin{definition}\label{def-Asy-combinatorial-weak-AONT}\rm 
A \textit{combinatorial asymmetric $(t_i,t_o,s,v)$-weak-AONT} is a $(v^s, 2s,$ $v)$-array $A$ with columns labeled $1,\ldots, 2s$, which is covering with respect to the following subsets of columns: 
\begin{enumerate}
    \item[(1)] $\{1,\ldots, s\}$;
    \item[(2)] $\{s+1,\ldots, 2s\}$;
    \item[(3)] $I\cup J$, for all $I\subseteq \{1,\ldots, s\}$ with $|I|=t_i$ and all $J\subseteq \{s+1,\ldots, 2s\}$ with $|J|=s-t_o$. 
\end{enumerate}
\end{definition} 

It is easily seen that a combinatorial asymmetric $(t_i,t_o,s,v)$-AONT is a combinatorial asymmetric $(t_i,t_o,s,v)$-weak-AONT, but not vice versa. 
The existence and constructions of combinatorial asymmetric (weak)-AONT have been studied in~\cite{E2021,ES2021+asy}. 
Notice that the combinatorial asymmetric AONT and weak-AONT are mathematical structures. In terms of their security, the following problem appears. 

\begin{problem}\label{def-question-asy}
What are the security properties of combinatorial asymmetric $(t_i,t_o,s,v)$-(weak)-AONTs for given (prior) probability distributions on the inputs? 
\end{problem}

The following answers to Problem~\ref{def-question-asy} can be found in~\cite{ES2021+asy}. 
\begin{theorem}\cite[Theorem 2.3]{ES2021+asy} 
\begin{enumerate}
    \item[(1)] A combinatorial asymmetric $(t_i,t_o,s,v)$-(weak)-AONT has weak security if all the input $s$-tuples have positive probability. 
    \item[(2)] A combinatorial asymmetric $(t_i,t_o,s,v)$-AONT has perfect security if every input $s$-tuple occurs with the same probability $1/v^s$. 
\end{enumerate}
\end{theorem}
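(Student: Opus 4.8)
The plan is to pass from the combinatorial object to a probabilistic one. First I would note that, since $A$ has exactly $v^s$ rows, the covering (a fortiori the unbiased) property with respect to $\{1,\ldots,s\}$ forces each input $s$-tuple to occur in exactly one row of $A$, and likewise with respect to $\{s+1,\ldots,2s\}$ for the output $s$-tuples. Hence there is a bijection $\pi\colon\Gamma^s\to\Gamma^s$ sending an input $s$-tuple to the output $s$-tuple lying in the same row, and the joint distribution of $(X_1,\ldots,X_s,Y_1,\ldots,Y_s)$ determined by $A$ together with the input distribution $p$ is $\Pr[(X_i)=\mathbf{x},(Y_i)=\mathbf{y}]=p(\mathbf{x})$ if $\mathbf{y}=\pi(\mathbf{x})$ and $0$ otherwise. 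Consequently, for $\mathcal{X}$ a set of $t_i$ input columns indexed by $I\subseteq\{1,\ldots,s\}$ and $\mathcal{Y}$ a set of $s-t_o$ output columns indexed by $J\subseteq\{s+1,\ldots,2s\}$, the quantities $\Pr[\mathcal{X}=\mathbf{a}]$, $\Pr[\mathcal{Y}=\mathbf{b}]$ and $\Pr[\mathcal{X}=\mathbf{a},\mathcal{Y}=\mathbf{b}]$ are sums of $p$ over explicitly described sets of rows of $A$ — namely the rows whose restriction to the columns in $I$, in $J$, or in $I\cup J$ takes the prescribed value — which turns the combinatorial hypotheses into statements about these probabilities.

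For part (1) I would argue as follows. Fix such $\mathcal{X},\mathcal{Y}$ and fix $\mathbf{a}\in\Gamma^{t_i}$, $\mathbf{b}\in\Gamma^{s-t_o}$. Since $|I\cup J|=t_i+(s-t_o)$, the covering property (3) supplies a row of $A$ whose restriction to $I$ is $\mathbf{a}$ and whose restriction to $J$ is $\mathbf{b}$; as that row carries an input $s$-tuple of positive probability (all of them do, by hypothesis), we get $\Pr[\mathcal{X}=\mathbf{a},\mathcal{Y}=\mathbf{b}]>0$, and a fortiori $\Pr[\mathcal{Y}=\mathbf{b}]>0$. Therefore $\Pr[\mathcal{X}=\mathbf{a}\mid\mathcal{Y}=\mathbf{b}]>0$ for every admissible $\mathbf{a}$ and $\mathbf{b}$, which is precisely the weak-security property. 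Only the covering form of (3) is used, so this handles both the AONT and the weak-AONT cases at once.

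For part (2) I would instead use the unbiased property and the uniform input distribution $p\equiv 1/v^s$. Counting rows, unbiasedness on $\{1,\ldots,s\}$ gives exactly $v^{s-t_i}$ rows with restriction $\mathbf{a}$ on $I$, hence $\Pr[\mathcal{X}=\mathbf{a}]=v^{s-t_i}/v^s=1/v^{t_i}$; similarly $\Pr[\mathcal{Y}=\mathbf{b}]=v^{t_o}/v^s=1/v^{s-t_o}$ from unbiasedness on $\{s+1,\ldots,2s\}$; and unbiasedness on $I\cup J$ gives exactly $v^{s}/v^{t_i+(s-t_o)}=v^{t_o-t_i}$ rows with restriction $(\mathbf{a},\mathbf{b})$ on $I\cup J$, hence $\Pr[\mathcal{X}=\mathbf{a},\mathcal{Y}=\mathbf{b}]=v^{t_o-t_i}/v^s=1/v^{t_i+s-t_o}$. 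Since the last equals $\Pr[\mathcal{X}=\mathbf{a}]\cdot\Pr[\mathcal{Y}=\mathbf{b}]$ for all $\mathbf{a},\mathbf{b}$, the tuples $\mathcal{X}$ and $\mathcal{Y}$ are independent, so $H(\mathcal{X}\mid\mathcal{Y})=H(\mathcal{X})$, i.e., perfect security holds.

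There is no deep obstacle here; the one point needing care is the bookkeeping at the interface between the combinatorial and probabilistic descriptions — justifying that properties (1)–(2) upgrade ``covering'' to a genuine bijection between input and output $s$-tuples (so that the joint law is well defined and supported exactly on the rows of $A$), and then keeping straight which columns are inputs and which are outputs when invoking property (3), since $\mathcal{X}$ consumes $t_i$ columns from the input half while $\mathcal{Y}$ consumes $s-t_o$ columns from the output half. Once the correspondence ``combinatorial property $\leftrightarrow$ number of qualifying rows $\leftrightarrow$ probability'' is in place, both parts reduce to the elementary counting above.
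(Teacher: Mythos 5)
Your proof is correct: passing from the array to the induced joint law via the bijection between rows and input $s$-tuples, then using the covering property on $I\cup J$ to get $\Pr[\mathcal{X}=\mathbf{a},\mathcal{Y}=\mathbf{b}]>0$ for weak security, and the exact row counts $v^{s-t_i}$, $v^{t_o}$, $v^{t_o-t_i}$ to get independence of $\mathcal{X}$ and $\mathcal{Y}$ (hence $H(\mathcal{X}|\mathcal{Y})=H(\mathcal{X})$) under the uniform distribution, is exactly the standard argument. The paper under review does not prove this statement itself but cites it from Esfahani and Stinson, whose proof is the same counting of qualifying rows and conditional probabilities, so your proposal matches the intended route.
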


Except for the above-mentioned, the security properties of combinatorial asymmetric AONT 
are unknown in general. 
This paper is devoted to exploring their security properties which are sandwiched between the known perfect security and weak security, as well as to providing answers to Problem~\ref{def-question-asy}. 
Again, we consider the typical scenario that all the $s$ inputs take values independently but not necessarily identically, and quantify the amount of information $H(\mathcal{X}|\mathcal{Y})$ about any $t$ inputs $\mathcal{X}$ that is not learned by any $s-t$ outputs $\mathcal{Y}$. 
By generalizing the discussions on combinatorial AONT, we establish general lower and upper bounds on $H(\mathcal{X}|\mathcal{Y})$ for combinatorial asymmetric AONTs 
(see Theorem~\ref{thm-main-Asymmetric}).
It is also shown that the established bounds could be attained in certain cases. 
In addition, some discussions on the differences of security properties between combinatorial asymmetric AONTs and combinatorial (symmetric) AONTs are presented as well.


The remainder of this paper is organized as follows. 
Section~\ref{sec-sym-AONT} establishes general lower and upper bounds for combinatorial AONT, and shows that the derived bounds could be achieved in certain cases. 
Section~\ref{sec-Asym-AONT} 
and Section~\ref{sec-Asym-weak-AONT} 
discuss the security properties for combinatorial asymmetric AONT and combinatorial asymmetric weak-AONT respectively. 
Finally Section~\ref{sec-conclusion} concludes this paper.

\section{AONT with independent inputs}
\label{sec-sym-AONT}

In this section, we first prove the general lower and upper bounds on $H(\mathcal{X}|\mathcal{Y})$ for  combinatorial AONT with independent inputs. 
Then we show that the derived bounds can be achieved in certain cases.

\subsection{General bounds for combinatorial AONT}

In this subsection, we establish the following theorem. 

\begin{theorem}\label{thm-main-symmetric} 
Let array $A\in \Gamma^{v^s\times 2s}$ be a combinatorial $(t,s,v)$-AONT whose columns are with respect to random variables $X_1,\ldots,X_s,Y_1,\ldots,Y_s$ respectively. Let $P_1,\ldots,P_s$ be the corresponding probability distributions of $X_1,\ldots,X_s$, which are mutually independent~\footnote{For a collection of random variables, the random variables are called mutually independent if each random variable is independent of any combination of other random variables in the collection.}~and take values from $\Gamma$. Then for any input sets $\mathcal{X},\mathcal{X}'\subseteq \{X_1,\ldots,$ $X_s\}$ such that $|\mathcal{X}|=|\mathcal{X}'|=t$, and any $s-t$ outputs $\mathcal{Y}\subseteq \{Y_1,\ldots,Y_s\}$ such that $|\mathcal{Y}|=s-t$, the followings hold.  
\begin{enumerate}
    \item[(1)]      
    \begin{align}\label{eq-thm-main-3}
        H(\mathcal{X}|\mathcal{Y}) = H(\mathcal{X}'|\mathcal{Y}). 
    \end{align}
    \item[(2)] 
    \begin{align}\label{eq-thm-main-lower}
    H(\mathcal{X}|\mathcal{Y}) \ge  \max\bigg\{0,\sum\limits_{i\in [s]} H(X_i)-(s-t)\log(v)\bigg\}.  
    \end{align}
    \item[(3)] 
    \begin{align}\label{eq-thm-main-upper}
    H(\mathcal{X}|\mathcal{Y}) 
        \le  
        \min\limits_{\mathcal{X}'\subseteq \{X_1,\ldots,X_s\}
        ,\atop |\mathcal{X}'|=t} H(\mathcal{X}')
        =
        \min\limits_{I\subseteq [s],\atop|I|=t}\, \sum\limits_{i\in I} H(X_i)\le  H(\mathcal{X}).
    \end{align}
\end{enumerate}
\end{theorem}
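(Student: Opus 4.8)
The plan is to first extract from the combinatorial structure a single clean identity for the joint entropy $H(\mathcal{X},\mathcal{Y})$, and then obtain all three parts from elementary entropy manipulations. Write $\mathcal{X}=\{X_i:i\in I\}$ with $I\subseteq[s]$, $|I|=t$, and $\mathcal{Y}=\{Y_j:j\in J\}$ with $J\subseteq\{s+1,\dots,2s\}$, $|J|=s-t$. Since $A$ is unbiased with respect to $\{1,\dots,s\}$ and $\tfrac{v^s}{v^s}=1$, the first $s$ columns of $A$ list every element of $\Gamma^s$ exactly once; combined with the one-to-one correspondence between inputs and outputs (properties (1),(2) of Definition~\ref{def-combinatorial-AONT}, equivalently conditions 1),2) of Definition~\ref{def-AONT}), each row of $A$ is uniquely labelled by its input $s$-tuple, the probability assigned to that row is $\prod_{i\in[s]}P_i(x_i)$, and hence $H(X_1,\dots,X_s,Y_1,\dots,Y_s)=H(X_1,\dots,X_s)=\sum_{i\in[s]}H(X_i)$ by mutual independence.

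The crux is the following observation. The column set $I\cup J$ has size $t+(s-t)=s$, and the AONT is unbiased with respect to it, so again with multiplicity $\tfrac{v^s}{v^s}=1$ the rows of $A_{I\cup J}$ list every $s$-tuple of $\Gamma^s$ exactly once. Consequently the pair $(\mathcal{X},\mathcal{Y})$ determines the row of $A$, whence $H(X_1,\dots,X_s,Y_1,\dots,Y_s\mid\mathcal{X},\mathcal{Y})=0$ and therefore
\begin{equation*}
H(\mathcal{X},\mathcal{Y})=H(X_1,\dots,X_s,Y_1,\dots,Y_s)=\sum_{i\in[s]}H(X_i),
\end{equation*}
an identity that does not depend on the particular choice of the $t$-subset $\mathcal{X}$. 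From here the three claims follow quickly. For (1), $H(\mathcal{X}\mid\mathcal{Y})=H(\mathcal{X},\mathcal{Y})-H(\mathcal{Y})=\sum_{i\in[s]}H(X_i)-H(\mathcal{Y})$ depends only on $\mathcal{Y}$, so $H(\mathcal{X}\mid\mathcal{Y})=H(\mathcal{X}'\mid\mathcal{Y})$. For (2), bounding $H(\mathcal{Y})\le(s-t)\log(v)$ since $\mathcal{Y}$ consists of $s-t$ symbols over an alphabet of size $v$, together with the trivial $H(\mathcal{X}\mid\mathcal{Y})\ge 0$, gives the stated maximum. For (3), apply part (1) with any $t$-subset $\mathcal{X}'\subseteq\{X_1,\dots,X_s\}$ to get $H(\mathcal{X}\mid\mathcal{Y})=H(\mathcal{X}'\mid\mathcal{Y})\le H(\mathcal{X}')$ because conditioning does not increase entropy; minimizing over $\mathcal{X}'$, and using $H(\mathcal{X}')=\sum_{i\in I'}H(X_i)$ by independence, yields the full chain, with $\le H(\mathcal{X})$ since $\mathcal{X}$ itself is an admissible $\mathcal{X}'$.

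I do not anticipate a genuine analytic obstacle here; the only point that needs care is the combinatorial reduction, namely recognizing that unbiasedness on a size-$s$ set of columns (which forces the trivial multiplicity $1$) makes $(\mathcal{X},\mathcal{Y})$ a complete identifier of the row, and that the independence hypothesis makes the induced distribution on rows the product distribution so that $H(X_1,\dots,X_s)=\sum_i H(X_i)$. Once the identity $H(\mathcal{X},\mathcal{Y})=\sum_{i\in[s]}H(X_i)$ is in hand, everything else is bookkeeping with the chain rule $H(AB)=H(A\mid B)+H(B)$, the bound $H(A\mid B)\le H(A)$, and $H(A)\le\log|\,\text{support}(A)|$.
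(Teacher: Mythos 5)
Your proposal is correct, and its parts (1) and (2) coincide with the paper's route: your identity $H(\mathcal{X},\mathcal{Y})=\sum_{i\in[s]}H(X_i)$, obtained from unbiasedness of $A$ on the size-$s$ column set $I\cup J$ (so that $(\mathcal{X},\mathcal{Y})$ pins down the row) together with mutual independence, is exactly Lemma~\ref{lemma-symmetric-general}, and the lower bound then follows from $H(\mathcal{Y})\le(s-t)\log(v)$ and non-negativity just as in the paper. Where you genuinely diverge is part (3). The paper proves the upper bound by a separate analytic claim, $H(\mathcal{Y})\ge H(\mathcal{X}_{\max})$, established through an explicit computation invoking Jensen's inequality (Lemma~\ref{lemma-Jensen-ineq}) on the output distribution, and then substitutes this into the identity. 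You instead observe that once (1) is known, the bound is immediate: for every $t$-subset $\mathcal{X}'$ one has $H(\mathcal{X}\mid\mathcal{Y})=H(\mathcal{X}'\mid\mathcal{Y})\le H(\mathcal{X}')=\sum_{i\in I'}H(X_i)$, and minimizing over $\mathcal{X}'$ gives \eqref{eq-thm-main-upper}, with $\le H(\mathcal{X})$ since $\mathcal{X}$ is itself admissible. This is a valid and strictly more elementary argument; in fact, applying it with $\mathcal{X}'=\overline{\mathcal{X}_{\max}}$ and the identity of Lemma~\ref{lemma-symmetric-general} recovers the paper's inequality $H(\mathcal{Y})\ge H(\mathcal{X}_{\max})$ as a byproduct, so nothing is lost for this theorem. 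What the paper's heavier Jensen-based computation buys is reusability: in the asymmetric setting of Theorem~\ref{thm-main-Asymmetric} the exact identity degrades to two-sided inequalities, the symmetry statement analogous to (1) fails (Example~\ref{example-asy}), and your shortcut via ``same conditional entropy for all $t$-subsets plus conditioning does not increase entropy'' is no longer available, whereas the direct lower bound on $H(\mathcal{Y})$ still generalizes. As a self-contained proof of Theorem~\ref{thm-main-symmetric}, your version is correct and cleaner.
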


In order to prove Theorem~\ref{thm-main-symmetric}, we first show the following lemma. 

\begin{lemma}\label{lemma-symmetric-general}
Under the assumption of Theorem~\ref{thm-main-symmetric}, for any $t$ inputs $\mathcal{X}\subseteq \{X_1,\ldots,$ $X_s\}$ such that $|\mathcal{X}|=t$ and any $s-t$ outputs $\mathcal{Y}\subseteq \{Y_1,\ldots,Y_s\}$ such that $|\mathcal{Y}|=s-t$, we have 
\begin{align}
     H(\mathcal{X}|\mathcal{Y}) =  \sum_{i\in [s]} H(X_i) - H(\mathcal{Y}).
\end{align}
\end{lemma}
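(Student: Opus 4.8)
The plan is to pass to the natural probability space on the rows of the array $A$ and exploit the fact that an unbiased restriction on a column set of size $s$ meets every $s$-tuple over $\Gamma$ exactly once, since $A$ has exactly $v^s$ rows.

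First I would put on the row index set of $A$ the probability distribution assigning to row $r$ the mass $\prod_{i\in[s]}P_i(A[r,i])$, i.e. the product of the input marginals evaluated at the entries of $r$ in columns $1,\ldots,s$. Because $A$ is unbiased with respect to $\{1,\ldots,s\}$, as $r$ ranges over the rows these $s$-tuples run over all of $\Gamma^s$ exactly once, so the masses sum to $\prod_{i\in[s]}\sum_{x\in\Gamma}P_i(x)=1$; moreover the columns of $A$ then realize the random variables $X_1,\ldots,X_s,Y_1,\ldots,Y_s$ with the prescribed mutually independent input marginals $P_1,\ldots,P_s$. Writing $R$ for the random row, the map $R\mapsto(X_1,\ldots,X_s)$ is then a bijection, so $H(X_1,\ldots,X_s)=H(R)$, and by mutual independence of the inputs $H(X_1,\ldots,X_s)=\sum_{i\in[s]}H(X_i)$.

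Next, given $\mathcal{X}=\{X_i:i\in I\}$ with $|I|=t$ and $\mathcal{Y}=\{Y_j:j\in J'\}$ with $|J'|=s-t$, the column set $I\cup\{s+j:j\in J'\}$ has size $s$ and is precisely one of the sets listed in item (3) of Definition~\ref{def-combinatorial-AONT}; hence the restriction of $A$ to it also contains every $s$-tuple exactly once, so $R\mapsto(\mathcal{X},\mathcal{Y})$ is a bijection as well, giving $H(\mathcal{X},\mathcal{Y})=H(R)=\sum_{i\in[s]}H(X_i)$. Combining this with the chain rule $H(\mathcal{X},\mathcal{Y})=H(\mathcal{Y})+H(\mathcal{X}|\mathcal{Y})$ yields $H(\mathcal{X}|\mathcal{Y})=\sum_{i\in[s]}H(X_i)-H(\mathcal{Y})$, which is the claimed identity.

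I do not anticipate a genuine obstacle here: the heart of the argument is just the chain rule together with additivity of entropy under independence and invariance of entropy under bijections. The only points that need a little care are that the product measure on rows genuinely induces the independent input marginals, and that both relevant size-$s$ column restrictions are bijections onto $\Gamma^s$; both follow immediately from the ``every $|I|$-tuple exactly $\tfrac{N}{v^{|I|}}=1$ times'' clause in the definition of an unbiased array. Once Lemma~\ref{lemma-symmetric-general} is established, it reduces Theorem~\ref{thm-main-symmetric} to analysing $H(\mathcal{Y})$, and in particular part~(1) is immediate since the right-hand side no longer depends on the choice of $\mathcal{X}$.
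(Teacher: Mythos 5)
Your proof is correct and follows essentially the same route as the paper: the paper's proof likewise uses the unbiased property of the size-$s$ column set $I\cup J$ (with $N/v^s=1$) to conclude that $(\mathcal{X},\mathcal{Y})$ determines $\overline{\mathcal{X}}$, so that $H(\mathcal{X},\mathcal{Y})=H(\mathcal{X},\overline{\mathcal{X}})=\sum_{i\in[s]}H(X_i)$, and then applies the chain rule. Your explicit row-variable $R$ and product measure on rows are just a repackaging of that same bijection argument, so there is nothing to add.
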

\begin{proof} 
Let $\overline{\mathcal{X}} = \{X_1,\ldots,X_s\}\setminus \mathcal{X}$. 
Then by the definition, we have 
\begin{align}
    H(\mathcal{X}|\mathcal{Y}) 
    \nonumber 
    & = H(\mathcal{X},\mathcal{Y})-H(\mathcal{Y})\\
    \nonumber 
    &= \sum_{\mathbf{u}\in \Gamma^t,  \mathbf{v}\in \Gamma^{s-t}} Pr[\mathcal{X}=\mathbf{u},\mathcal{Y}=\mathbf{v}] \log \frac{1}{Pr[\mathcal{X}=\mathbf{u},\mathcal{Y}=\mathbf{v}]}-H(\mathcal{Y})\\
    \label{eq-y-to-x-bar}
    &= \sum_{\mathbf{u}\in \Gamma^t,  \mathbf{u}'\in \Gamma^{s-t}} Pr[\mathcal{X}=\mathbf{u},\overline{\mathcal{X}}=\mathbf{u}'] \log \frac{1}{Pr[\mathcal{X}=\mathbf{u},\overline{\mathcal{X}}=\mathbf{u}']}
    -H(\mathcal{Y})\\
    \nonumber
    & = H(\mathcal{X},\overline{\mathcal{X}}) - H(\mathcal{Y})\\
    \label{eq-final-H-to-sum}
    & = \sum_{i\in [s]} H(X_i) - H(\mathcal{Y})
\end{align}
where~\eqref{eq-y-to-x-bar} follows from the unbiased property of the combinatorial $(t,s,v)$-AONT, i.e.,  each pair of $\mathcal{X}=\mathbf{u},\mathcal{Y}=\mathbf{v}$ uniquely determines a $\mathbf{u}'\in \Gamma^{s-t}$ such that $\overline{\mathcal{X}}=\mathbf{u}'$; and \eqref{eq-final-H-to-sum} follows from the assumption that all $X_i$ are mutually independent. This proves the lemma. 
\end{proof}

According to Lemma~\ref{lemma-symmetric-general}, estimating $H(\mathcal{X}|\mathcal{Y})$ requires the analysis on $H(\mathcal{Y})$ for given probability distributions on inputs $X_1,\ldots,X_s$.

To prove Theorem~\ref{thm-main-symmetric}, we will also make use of the following lemma from the Jensen's inequality on the convex function $f(x)=x\log(x)$. 

\begin{lemma}[\cite{CT-book,SP}]
\label{lemma-Jensen-ineq}  
For any $\{m_i\}_{i=1}^n$ with $m_i\ge 0$ and $\sum_{i=1}^n m_i =1$ and non-negative sequence $\{x_i\}_{i=1}^n$, we have 
\begin{align}\label{eq-Jensen-ineq}
    \Bigg(\sum_{i=1}^n m_ix_i\Bigg) \log \Bigg(\sum_{i=1}^n m_ix_i\Bigg) 
    \le \sum_{i=1}^n m_ix_i\log(x_i). 
\end{align}
Furthermore, when $m_i>0$ for all $i\in [n]$, the equality of \eqref{eq-Jensen-ineq} holds if and only if $x_1=\cdots=x_n$. 
\end{lemma}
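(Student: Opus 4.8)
The plan is to recognize the asserted inequality as precisely Jensen's inequality applied to the function $f(x)=x\log(x)$, which is convex on $[0,\infty)$ (with the usual convention $0\log 0=0$) and strictly convex on $(0,\infty)$; the strict convexity will then deliver the equality characterization. Since the weights satisfy $m_i\ge 0$ and $\sum_{i=1}^n m_i=1$, the quantity $a:=\sum_{i=1}^n m_i x_i$ is a convex combination of the $x_i$, the left-hand side of \eqref{eq-Jensen-ineq} is exactly $f(a)$, and the right-hand side is exactly $\sum_{i=1}^n m_i f(x_i)$. Before invoking convexity I would dispose of the degenerate case $a=0$: this forces $x_i=0$ whenever $m_i>0$, so both sides equal $0$ and the inequality holds with equality. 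Hence it suffices to treat $a>0$, where $f$ is differentiable at $a$.

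For the main case I would establish convexity directly and then use the supporting-line argument, which is self-contained and simultaneously prepares the equality analysis. First I would compute $f'(x)=\log(x)+\log(e)$ and $f''(x)=\log(e)/x>0$ for $x>0$, where the factor $\log(e)$ merely reflects the logarithm base and is positive as long as the base exceeds one; thus $f$ is strictly convex on $(0,\infty)$ and, by continuity at the origin, convex on $[0,\infty)$. Convexity yields the tangent-line bound $f(x_i)\ge f(a)+f'(a)(x_i-a)$ for every $i$. Multiplying by $m_i\ge 0$, summing over $i$, and using $\sum_i m_i=1$ together with $\sum_i m_i x_i=a$ to cancel the linear term gives $\sum_{i=1}^n m_i f(x_i)\ge f(a)$, which is exactly \eqref{eq-Jensen-ineq}.

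For the equality statement, assume $m_i>0$ for all $i$. Equality in the summed bound forces equality in each individual tangent-line inequality $f(x_i)\ge f(a)+f'(a)(x_i-a)$, i.e.\ every $x_i$ lies on the tangent to $f$ at $a$. Here \emph{strict} convexity of $f$ on $(0,\infty)$ is the crucial ingredient: the tangent line at $a$ meets the graph of $f$ only at the point $a$, so $x_i=a$ for all $i$, whence $x_1=\cdots=x_n$. Conversely, if all $x_i$ equal a common value $c$, then $a=c$ and both sides reduce to $c\log(c)$, so equality holds.

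I do not expect a genuine obstacle, since this is the classical Jensen inequality; the only points demanding care are the bookkeeping of the logarithm base in $f''$, the continuity extension of convexity to the boundary $x=0$, the separate treatment of the degenerate case $a=0$ (where $f'(a)$ is undefined), and the rigorous use of \emph{strict} rather than mere convexity to pin down the equality case. I would regard this last step, upgrading ``every $x_i$ lies on the tangent line'' to ``every $x_i=a$,'' as the most delicate part of the argument.
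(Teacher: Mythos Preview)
Your argument is correct and is the standard derivation of this inequality via Jensen's inequality for the strictly convex function $f(x)=x\log x$, including the careful handling of the boundary case $a=0$ and the equality characterization. Note, however, that the paper does not supply its own proof of this lemma at all: it is stated with citations to \cite{CT-book,SP} and then used as a black box in the proofs of Theorems~\ref{thm-main-symmetric} and~\ref{thm-main-Asymmetric}. So there is nothing to compare against beyond observing that your self-contained tangent-line proof is exactly the textbook argument those references would provide.
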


Now we are ready to prove Theorem~\ref{thm-main-symmetric}. 
\begin{proof}[Proof of Theorem~\ref{thm-main-symmetric}] 

(1)
The equality \eqref{eq-thm-main-3} follows immediately from Lemma~\ref{lemma-symmetric-general}, i.e., 
\begin{align*}
     H(\mathcal{X}|\mathcal{Y}) = H(\mathcal{X}'|\mathcal{Y}) = \sum_{i\in [s]} H(X_i) - H(\mathcal{Y}).
\end{align*}

(2) 
Since the entropy $H(\mathcal{X}|\mathcal{Y})$ is non-negative and according to Lemma~\ref{lemma-symmetric-general}, 
\begin{align*}
     H(\mathcal{X}|\mathcal{Y})
     &= \sum_{i\in [s]} H(X_i) - H(\mathcal{Y})\\
     &\ge \sum_{i\in [s]} H(X_i) - \sum_{Y\in \mathcal{Y}}H(Y) \\
     &\ge  \sum_{i\in [s]} H(X_i) - (s-t)\log(v)
\end{align*}
where the first inequality follows from the relation $H(Y,Y')\le H(Y)+H(Y')$; and the second inequality follows since $H(Y)\le \log(v)$ for any output $Y$. 
Thus \eqref{eq-thm-main-lower} follows. 

(3)
Recall the assumption that $X_1,\ldots,X_s$ are mutually independent.
Let $\mathcal{X}_{\max}\subseteq \{X_1,\ldots,X_s\}$ such that $|\mathcal{X}_{\max}|=s-t$ denote the  collection of input random variables according to the largest $s-t$ entropy values among $H(X_1),\ldots,$ $H(X_s)$.  
Denote $\overline{\mathcal{X}_{\max}} = \{X_1,\ldots,X_s\} \setminus \mathcal{X}_{\max}$. Clearly, $|\overline{\mathcal{X}_{\max}}|=t$. We now claim that for any $s-t$ outputs $\mathcal{Y}\subseteq \{Y_1,\ldots,Y_s\}$ such that $|\mathcal{Y}|=s-t$, we have 
\begin{align}\label{eq--thm5.2-proof-claim-rhs}
    H(\mathcal{Y})\ge H(\mathcal{X}_{\max}). 
\end{align}
In fact, by the assumption of combinatorial $(t,s,v)$-AONT, we have 
\begin{align}
    H(\mathcal{Y}) 
    \nonumber 
    & = -\sum_{\mathbf{v}\in \Gamma^{s-t}} Pr[\mathcal{Y}=\mathbf{v}] \log  \big(Pr[\mathcal{Y}=\mathbf{v}]\big)\\
    \label{eq-thm5.2-proof-use-CANOT}
    & = -\sum_{\mathbf{v}\in \Gamma^{s-t}} 
       \Bigg( \sum_{\mathbf{u}'\in \Gamma^t} Pr[\mathcal{X}_{\max}=\mathbf{u},\overline{\mathcal{X}_{\max}}=\mathbf{u}'] \Bigg)
       \cdot
       \log 
       \Bigg( \sum_{\mathbf{u}'\in \Gamma^t} Pr[\mathcal{X}_{\max}=\mathbf{u},\overline{\mathcal{X}_{\max}}=\mathbf{u}'] \Bigg) \\
    \label{eq-thm5.2-proof-use-independent}
    & = -\sum_{\mathbf{v}\in \Gamma^{s-t}} 
       \Bigg( \sum_{\mathbf{u}'\in \Gamma^t} Pr[\overline{\mathcal{X}_{\max}}=\mathbf{u}']\cdot Pr[\mathcal{X}_{\max}=\mathbf{u}]\Bigg)
       \cdot
       \log 
       \Bigg( \sum_{\mathbf{u}'\in \Gamma^t} Pr[\overline{\mathcal{X}_{\max}}=\mathbf{u}']\cdot Pr[\mathcal{X}_{\max}=\mathbf{u}]\Bigg) \\
    \label{eq-thm5.2-proof-use-Jensen}
    & \ge 
       -\sum_{\mathbf{v}\in \Gamma^{s-t}}  
       \sum_{\mathbf{u}'\in \Gamma^t} Pr[\overline{\mathcal{X}_{\max}}=\mathbf{u}']\cdot Pr[\mathcal{X}_{\max}=\mathbf{u}] \cdot 
       \log\big( Pr[\mathcal{X}_{\max}=\mathbf{u}] \big)\\ 
    \label{eq-thm5.2-proof-switch-sums}
    & = 
       -
       \sum_{\mathbf{u}'\in \Gamma^t} Pr[\overline{\mathcal{X}_{\max}}=\mathbf{u}']\cdot
       \sum_{\mathbf{v}\in \Gamma^{s-t}}  
       Pr[\mathcal{X}_{\max}=\mathbf{u}] \cdot 
       \log\big( Pr[\mathcal{X}_{\max}=\mathbf{u}] \big)\\    
    \label{eq-thm5.2-proof-v-to-u}
    & = 
       -
       \sum_{\mathbf{u}'\in \Gamma^t} Pr[\overline{\mathcal{X}_{\max}}=\mathbf{u}']\cdot
       \sum_{\mathbf{u}\in \Gamma^{s-t}}  
       Pr[\mathcal{X}_{\max}=\mathbf{u}] \cdot 
       \log\big( Pr[\mathcal{X}_{\max}=\mathbf{u}] \big)\\ 
    \nonumber 
    & =
       \sum_{\mathbf{u}'\in \Gamma^t} Pr[\overline{\mathcal{X}_{\max}}=\mathbf{u}']\cdot 
       H(\mathcal{X}_{\max})\\
    \nonumber 
    & = H(\mathcal{X}_{\max}) 
\end{align}
where \eqref{eq-thm5.2-proof-use-CANOT} follows from the unbiased property (on $\overline{\mathcal{X}_{\max}}$ and $\mathcal{Y}$) of the combinatorial $(t,s,v)$-AONT and the $(s-t)$-tuple $\mathbf{u}$ such that  $\mathcal{X}_{\max}=\mathbf{u}$ is uniquely determined by $\mathcal{Y}=\mathbf{v}$ and $\overline{\mathcal{X}_{\max}}=\mathbf{u}'$; 
\eqref{eq-thm5.2-proof-use-independent} follows from the mutual independence among $X_1,\ldots,X_s$; 
\eqref{eq-thm5.2-proof-use-Jensen} follows from Lemma~\ref{lemma-Jensen-ineq} with $\sum_{\mathbf{u}'\in \Gamma^t} Pr[\overline{\mathcal{X}_{\max}}=\mathbf{u}'] =1$ and $Pr[\mathcal{X}_{\max}=\mathbf{u}]\ge 0$; 
\eqref{eq-thm5.2-proof-switch-sums} follows by switching the two sums on $\mathbf{v}$ and $\mathbf{u}'$ which is doable due to their independence; 
and \eqref{eq-thm5.2-proof-v-to-u} follows from the assumption of combinatorial $(t,s,v)$-AONT and its unbaised property. Thus the claim holds. 

Combining Lemma~\ref{lemma-symmetric-general} with \eqref{eq--thm5.2-proof-claim-rhs}, we obtain 
\begin{align*}
     H(\mathcal{X}|\mathcal{Y})
     &= \sum_{i\in [s]} H(X_i) - H(\mathcal{Y})\\
     &\le \sum_{i\in [s]} H(X_i) - H(\mathcal{X}_{\max})  \\
     & = \sum_{i\in [s]} H(X_i) - \sum_{X\in \mathcal{X}_{\max}}H(X) \\
     &= \min_{I\subseteq [s],\atop|I|=t}\, \sum_{i\in I} H(X_i). 
\end{align*}
This completes the proof. 
\end{proof}

Regarding the item (1) of  Theorem~\ref{thm-main-symmetric}, we remark that for any distinct output sets  $\mathcal{Y},\mathcal{Y}'\subseteq \{Y_1,\ldots,Y_s\}$ such that $\mathcal{Y}\ne \mathcal{Y}'$ and $|\mathcal{Y}|=|\mathcal{Y}'|=s-t$, the relation $H(\mathcal{X}|\mathcal{Y}) = H(\mathcal{X}'|\mathcal{Y}')$ does not hold in general, even for the case when  $\mathcal{X}=\mathcal{X}'$ (see Example~\ref{example-1}). 
However the equality $H(\mathcal{X}|\mathcal{Y}) = H(\mathcal{X}'|\mathcal{Y}')$ always holds in the case when at most $t$ inputs are with non-uniform distributions (see Section~\ref{subsec-AONT-at-most-t-uniform}).

Also we would remark that 
the upper bound in Theorem~\ref{thm-main-symmetric} are tight in the sense that it can be achieved in certain cases (see Section~\ref{subsec-AONT-at-most-t-uniform}).  
In contrast, it is also worth noting that in general  $H(\mathcal{X}|\mathcal{Y})$ depends on the input distributions and the upper bound in Theorem~\ref{thm-main-symmetric} might not always be attained, see Example~\ref{example-1} below considering the case when more than $t$ inputs are with non-uniform distributions. 


\begin{example}\label{example-1}\rm 
Consider the following combinatorial $(1,2,3)$-AONT over the alphabet $\Gamma=\{a, b, c\}$ shown in Table~\ref{table-1} as in \cite{ES2021+}. 

\begin{table}[h]
 \centering
  \begin{tabular}{|c|c||c|c|}
   \hline
   $X_1$ & $X_2$ & $Y_1$ & $Y_2$\\
   \hline
   $a$ & $a$ & $a$ & $a$\\
   $a$ & $b$ & $c$ & $b$\\
   $a$ & $c$ & $b$ & $c$\\
   $b$ & $a$ & $b$ & $b$\\
   $b$ & $b$ & $a$ & $c$\\
   $b$ & $c$ & $c$ & $a$\\
   $c$ & $a$ & $c$ & $c$\\
   $c$ & $b$ & $b$ & $a$\\
   $c$ & $c$ & $a$ & $b$\\
   \hline
  \end{tabular}
  \vskip 0.3cm
  \caption{A combinatorial $(1,2,3)$-AONT over the alphabet $\{a, b, c\}$}
  \label{table-1}
\end{table}
Suppose that
\begin{align*}
    Pr[X_1=a]=\frac{1}{4},\  Pr[X_1=b]=\frac{1}{8},\  Pr[X_1=c]=\frac{5}{8},\\ Pr[X_2=a]=\frac{1}{3},\  Pr[X_2=b]=\frac{1}{6},\  Pr[X_2=c]=\frac{1}{2}.
\end{align*}
Plugging into Table~\ref{table-1} gives 
\begin{align*}
    Pr[Y_1=a]=\frac{5}{12},\  Pr[Y_1=b]=\frac{13}{48},\  Pr[Y_1=c]=\frac{5}{16},\\ Pr[Y_2=a]=\frac{1}{4},\  Pr[Y_2=b]=\frac{19}{48},\  Pr[Y_2=c]=\frac{17}{48}.
\end{align*}
Then by the definition, it is easy to calculate 
\begin{align*}
    H(X_1)=1.298795, \ H(X_2)=1.459148,\ H(Y_1)=1.561053, \ H(Y_2)=1.559607.
\end{align*}
According to Lemma~\ref{lemma-symmetric-general} and \eqref{eq-thm-main-3}, we obtain 
\begin{align*}
    H(X_1|Y_1) = H(X_2|Y_1) 
    &= H(X_1) + H(X_2) - H(Y_1) 
    = 1.196889<\min\{H(X_1),H(X_2)\}, \\
    H(X_1|Y_2) = H(X_2|Y_2) 
    &= H(X_1) + H(X_2) - H(Y_2) 
    = 1.198335<\min\{H(X_1),H(X_2)\}, 
\end{align*}
while it is easy to see $H(X_1|Y_1) \neq H(X_1|Y_2)$ and
$H(X_2|Y_1) \neq H(X_2|Y_2)$. 

\end{example}

\subsection{AONT with at most $t$ non-uniform inputs}
\label{subsec-AONT-at-most-t-uniform} 

In this subsection, we consider the situation where at most $t$ inputs are with non-uniform distributions. It is shown that the bounds in Theorem~\ref{thm-main-symmetric} turn to be tight in this case.   

\begin{theorem}\label{thm-AONT-at-most-t-uniform}
Under the assumption of Theorem~\ref{thm-main-symmetric}, if at most $t$ of $P_1,\ldots,P_s$ are non-uniform, then for any $t$ inputs $\mathcal{X}\subseteq \{X_1,\ldots,X_s\}$ such that $|\mathcal{X}|=t$ and any $s-t$ outputs $\mathcal{Y}\subseteq \{Y_1,\ldots,Y_s\}$ such that $|\mathcal{Y}|=s-t$, we have 
\begin{align}\label{eq-thm-main-1}
        H(\mathcal{X}|\mathcal{Y}) = \min_{\mathcal{X}'\subseteq \{X_1,\ldots,X_s\},\atop |\mathcal{X}'|=t} H(\mathcal{X}')= \min_{I\subseteq [s],\atop|I|=t}\, \sum_{i\in I} H(X_i) \le  H(\mathcal{X}).
    \end{align}
    In other words, if $P_{i_1},\ldots,P_{i_r}$, where $1\le r\le t$, are non-uniform, and all the others in $P_1,\ldots,P_s$ are uniform, then 
    \begin{align}\label{eq-thm-main-2}
        H(\mathcal{X}|\mathcal{Y}) = H(X_{i_1}) + \cdots + H(X_{i_r}) + (t-r)\log(v).
    \end{align}
    
    
\end{theorem}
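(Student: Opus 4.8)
The plan is to combine Lemma~\ref{lemma-symmetric-general} with the lower bound $H(\mathcal{Y})\ge H(\mathcal{X}_{\max})$ already established inside the proof of Theorem~\ref{thm-main-symmetric}(3), and to observe that under the present hypothesis these quantities are squeezed against the trivial ceiling $(s-t)\log(v)$. Concretely, let $r$ be the number of non-uniform distributions among $P_1,\ldots,P_s$, so $r\le t$ and at least $s-t$ of the $X_i$ are uniform on $\Gamma$, each with entropy exactly $\log(v)$. Since a uniform distribution on $\Gamma$ attains the maximum entropy $\log(v)$, the set $\mathcal{X}_{\max}$ of the $s-t$ inputs carrying the largest entropy values may be chosen to consist entirely of uniform inputs, and then mutual independence gives $H(\mathcal{X}_{\max})=\sum_{X\in\mathcal{X}_{\max}}H(X)=(s-t)\log(v)$.

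First I would recall the trivial bound $H(\mathcal{Y})\le(s-t)\log(v)$, valid for any $(s-t)$-subset $\mathcal{Y}\subseteq\{Y_1,\ldots,Y_s\}$ since $\mathcal{Y}$ takes values in $\Gamma^{s-t}$. Comparing this with the inequality $H(\mathcal{Y})\ge H(\mathcal{X}_{\max})=(s-t)\log(v)$ from the claim proved in Theorem~\ref{thm-main-symmetric}(3) forces $H(\mathcal{Y})=(s-t)\log(v)$ for every such $\mathcal{Y}$. Substituting into Lemma~\ref{lemma-symmetric-general} then yields, for every $t$-subset $\mathcal{X}$ and every $(s-t)$-subset $\mathcal{Y}$,
\begin{align*}
H(\mathcal{X}|\mathcal{Y})=\sum_{i\in[s]}H(X_i)-(s-t)\log(v),
\end{align*}
which in particular does not depend on the choice of $\mathcal{X}$ or $\mathcal{Y}$.

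It then remains to rewrite this common value in the forms claimed in \eqref{eq-thm-main-1} and \eqref{eq-thm-main-2}. Writing $P_{i_1},\ldots,P_{i_r}$ for the non-uniform distributions and using $H(X_i)=\log(v)$ for the remaining $s-r$ indices, we get $\sum_{i\in[s]}H(X_i)=\sum_{j=1}^{r}H(X_{i_j})+(s-r)\log(v)$, so the displayed value equals $\sum_{j=1}^{r}H(X_{i_j})+(t-r)\log(v)$, which is \eqref{eq-thm-main-2}. Since every non-uniform input has entropy strictly below $\log(v)$ while every uniform input has entropy exactly $\log(v)$, a $t$-subset $I\subseteq[s]$ minimizing $\sum_{i\in I}H(X_i)$ is obtained by taking all $r$ non-uniform indices together with any $t-r$ uniform ones, giving $\min_{|I|=t}\sum_{i\in I}H(X_i)=\sum_{j=1}^{r}H(X_{i_j})+(t-r)\log(v)$ as well; the inequality $\min_{|I|=t}\sum_{i\in I}H(X_i)\le H(\mathcal{X})$ is immediate. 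This establishes \eqref{eq-thm-main-1}, and also proves the equality $H(\mathcal{X}|\mathcal{Y})=H(\mathcal{X}'|\mathcal{Y}')$ for all choices of input and output sets of the prescribed sizes.

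As for difficulty, there is essentially no hard step: the heavy lifting (the Jensen-inequality argument delivering $H(\mathcal{Y})\ge H(\mathcal{X}_{\max})$) was already carried out for Theorem~\ref{thm-main-symmetric}. The only point needing care is the bookkeeping that $\mathcal{X}_{\max}$ can be taken to consist of uniform inputs, so its entropy is exactly $(s-t)\log(v)$, and that this meets the trivial upper bound on $H(\mathcal{Y})$; one should also note the degenerate case $r=0$ (all inputs uniform), where the statement recovers the known perfect-security fact $H(\mathcal{X}|\mathcal{Y})=t\log(v)=H(\mathcal{X})$.
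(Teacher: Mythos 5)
Your proposal is correct and follows essentially the paper's own route: the paper likewise observes that under the hypothesis $\mathcal{X}_{\max}$ consists of uniform inputs, so the lower bound of Theorem~\ref{thm-main-symmetric}(2) and the upper bound of Theorem~\ref{thm-main-symmetric}(3) coincide at $\sum_{i\in[s]}H(X_i)-(s-t)\log(v)$; your squeeze of $H(\mathcal{Y})$ between $H(\mathcal{X}_{\max})$ and $(s-t)\log(v)$ is just this same argument unpacked at the level of Lemma~\ref{lemma-symmetric-general}, and your bookkeeping for \eqref{eq-thm-main-1} and \eqref{eq-thm-main-2} is accurate.
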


\begin{proof}
The conclusion follows from the lower and upper bounds in Theorem~\ref{thm-main-symmetric} in which all the inputs in $\mathcal{X}_{\max}$ are with the uniform distribution.  
\end{proof}

Notice that the above theorem assumes the mutually independence among all inputs $X_1,\ldots,X_s$. 
In the following, we show that the conclusion as in Theorem~\ref{thm-AONT-at-most-t-uniform} also holds even for the case that a local dependence among at most $t$ inputs exists.  Precisely, 
we have the following theorem. 

\begin{theorem}\label{thm-AONT-at-most-t-uniform-depend}
Let array $A\in \Gamma^{v^s\times 2s}$ be a combinatorial $(t,s,v)$-AONT whose columns are with respect to random variables $X_1,\ldots,X_s,Y_1,\ldots,Y_s$ respectively. 
If there are at most $t$ inputs $\mathcal{X}_0\subseteq \{X_1,\ldots,X_s\}$ such that $|\mathcal{X}_0|\le t$ are dependent, and all the other inputs in $\overline{\mathcal{X}}_0 = \{X_1,\ldots,X_s\}\setminus \mathcal{X}_0$ are mutually independent~\footnote{It also implicitly assumes that any input in $\overline{\mathcal{X}_0}$ is independent of any combination of other inputs in $\{X_1, \ldots, X_s\}$.} and with the uniform distribution, then for any $t$ inputs $\mathcal{X}\subseteq \{X_1,\ldots,X_s\}$ such that $|\mathcal{X}|=t$ and any $s-t$ outputs $\mathcal{Y}\subseteq \{Y_1,\ldots,Y_s\}$ such that $|\mathcal{Y}|=s-t$, we have 
\begin{align}\label{eq-thm-main-1}
        H(\mathcal{X}|\mathcal{Y}) = \min_{\mathcal{X}'\subseteq \{X_1,\ldots,X_s\},\atop |\mathcal{X}'|=t} H(\mathcal{X}').
\end{align}
\end{theorem}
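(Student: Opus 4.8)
The plan is to mirror the two-step strategy behind Theorem~\ref{thm-AONT-at-most-t-uniform} — reduce $H(\mathcal{X}|\mathcal{Y})$ to a difference of two entropies and then evaluate each — but to replace every appeal to full mutual independence by an appeal to the weaker hypothesis that only the inputs outside $\mathcal{X}_0$ are independent, uniform, and independent of $\mathcal{X}_0$. I cannot invoke Theorem~\ref{thm-main-symmetric} directly, since its proof assumes all $X_i$ are mutually independent, so the argument must be redone from scratch, but most pieces carry over.

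First I would observe that the derivation in the proof of Lemma~\ref{lemma-symmetric-general} up to the line $H(\mathcal{X}|\mathcal{Y}) = H(\mathcal{X},\overline{\mathcal{X}}) - H(\mathcal{Y})$ uses only the unbiased property of the combinatorial $(t,s,v)$-AONT on the column set $\mathcal{X}\cup\mathcal{Y}$ (which forces the correspondence $(\mathbf{u},\mathbf{v})\mapsto\text{row}\mapsto\mathbf{u}'$ to be a bijection) and needs no independence assumption whatsoever. Hence, in the present setting as well,
\[
H(\mathcal{X}|\mathcal{Y}) = H(X_1,\ldots,X_s) - H(\mathcal{Y}).
\]
Since the inputs in $\overline{\mathcal{X}}_0$ are mutually independent, uniform, and independent of $\mathcal{X}_0$, the first term splits as $H(X_1,\ldots,X_s) = H(\mathcal{X}_0) + (s-|\mathcal{X}_0|)\log(v)$.

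The crux is then to show that every $(s-t)$-subset of outputs is jointly uniform, i.e. $H(\mathcal{Y}) = (s-t)\log(v)$. Here the hypothesis $|\mathcal{X}_0|\le t$ is used essentially: pick an index set $I\subseteq[s]$ with $|I|=t$ and $\mathcal{X}_0\subseteq\{X_i:i\in I\}$, completing $I$ with indices of uniform inputs; then the $s-t$ inputs with indices outside $I$ are all uniform and independent of the inputs indexed by $I$. By the unbiased property of $A$ on $I\cup\mathcal{Y}$ together with the unbiased property on $\{1,\ldots,s\}$ (items (3) and (1) of Definition~\ref{def-combinatorial-AONT}), the assignment of values on $I\cup\mathcal{Y}$ and the assignment of values on $\{1,\ldots,s\}$ are each in bijection with the rows, hence with each other; this bijection fixes the coordinates indexed by $I$, so for each fixed value of the inputs indexed by $I$ it restricts to a bijection between the values of $\mathcal{Y}$ and the values of the remaining $s-t$ inputs. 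Conditioned on any fixed value of the inputs indexed by $I$, those remaining inputs are still uniform on $\Gamma^{s-t}$, hence so is $\mathcal{Y}$; averaging over the conditioning value shows $\mathcal{Y}$ is uniform, whence $H(\mathcal{Y}) = (s-t)\log(v)$.

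Combining the pieces gives $H(\mathcal{X}|\mathcal{Y}) = H(\mathcal{X}_0) + (t-|\mathcal{X}_0|)\log(v)$, and it remains to identify this with $\min_{\mathcal{X}'}H(\mathcal{X}')$. For any $t$-subset $\mathcal{X}'$, writing $j=|\mathcal{X}'\cap\mathcal{X}_0|$ and using that the uniform part of $\mathcal{X}'$ is independent of the rest together with $H(\mathcal{X}_0)-H(\mathcal{X}'\cap\mathcal{X}_0)\le(|\mathcal{X}_0|-j)\log(v)$, one gets $H(\mathcal{X}')\ge H(\mathcal{X}_0)+(t-|\mathcal{X}_0|)\log(v)$, with equality when $\mathcal{X}'$ consists of $\mathcal{X}_0$ together with any $t-|\mathcal{X}_0|$ uniform inputs; this establishes \eqref{eq-thm-main-1}. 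I expect the main obstacle to be the uniformity of $\mathcal{Y}$: it is precisely the condition $|\mathcal{X}_0|\le t$ that lets all dependent inputs be packed into an admissible index set of size $t$, leaving a complementary block of $s-t$ uniform independent inputs; without it, $\mathcal{Y}$ need not be uniform and the clean formula fails, as Example~\ref{example-1} illustrates.
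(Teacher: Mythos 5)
Your proposal is correct and follows essentially the same route as the paper's proof: the identity $H(\mathcal{X}|\mathcal{Y})=H(X_1,\ldots,X_s)-H(\mathcal{Y})$ obtained from the unbiased (bijective) property alone, followed by showing $\mathcal{Y}$ is uniform by packing $\mathcal{X}_0$ into a $t$-set whose complementary $s-t$ inputs are independent and uniform --- exactly the paper's direct computation that $Pr[\mathcal{Y}=\mathbf{v}]=v^{-(s-t)}$. Your closing identification of $H(\mathcal{X}_0)+(t-|\mathcal{X}_0|)\log(v)$ with $\min_{\mathcal{X}'}H(\mathcal{X}')$ just makes explicit a step the paper leaves implicit.
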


\begin{proof} 
First, similar to Lemma~\ref{lemma-symmetric-general}, 
for any $t$ inputs $\mathcal{X}\subseteq \{X_1,\ldots,X_s\}$ such that $|\mathcal{X}|=t$ and any $s-t$ outputs $\mathcal{Y}\subseteq \{Y_1,\ldots,Y_s\}$ such that $|\mathcal{Y}|=s-t$, we have 
\begin{align}\label{eq-lemma-depend}
     H(\mathcal{X}|\mathcal{Y}) =  H(X_1,\ldots,X_s) - H(\mathcal{Y}).
\end{align}

According to the assumption 
we can assume that $\mathcal{X}_U\subseteq \{X_1,\ldots,X_s\}$ such that $|\mathcal{X}_U|=s-t$ be a set of $s-t$ 
mutually independent inputs with uniform probability distribution. 
Let $\overline{\mathcal{X}_U} = \{X_1,\ldots,X_s\}\setminus \mathcal{X}_U$. Clearly $|\overline{\mathcal{X}_U}|=t$ and $\mathcal{X}_0 \subseteq \overline{\mathcal{X}_U}$.     

We first claim that in this setting for any $s-t$ outputs $\mathcal{Y}\subseteq \{Y_1,\ldots,Y_s\}$ such that $|\mathcal{Y}|=s-t$,  
\begin{align}\label{eq-claim-Hy}
    H(\mathcal{Y}) = H(\mathcal{X}_U) = (s-t)\log(v). 
\end{align}
Indeed, the above \eqref{eq-claim-Hy} can be verified by following the same line as the argument for \eqref{eq--thm5.2-proof-claim-rhs}, in which the equality in \eqref{eq-thm5.2-proof-use-Jensen} is achieved according to Lemma~\ref{lemma-Jensen-ineq} with   $Pr[\mathcal{X}_{\max}=\mathbf{u}]=Pr[\mathcal{X}_U=\mathbf{u}]=\frac{1}{v^{s-t}}$ for any $\mathbf{u}\in \Gamma^{s-t}$. Here we would provide an alternative (and simpler) proof for \eqref{eq-claim-Hy} as below by directly showing that the probability distribution on $\mathcal{Y}$ is uniform. 

It suffices to prove that $Pr[\mathcal{Y}=\mathbf{v}] = \frac{1}{v^{s-t}}$ for any $\mathbf{v}\in \Gamma^{s-t}$, i.e., outputs $\mathcal{Y}$ take values on all $(s-t)$-tuples with equal probability. In fact, 
\begin{align}
    Pr[\mathcal{Y}=\mathbf{v}] 
    \label{eq-use-unbiased-y-xU-bar}
    & = \sum_{\mathbf{u}'\in \Gamma^t} Pr[\mathcal{X}_U=\mathbf{u},\overline{\mathcal{X}_U}=\mathbf{u}']\\
    \label{eq-use-mutually-independent}
    &=  \sum_{\mathbf{u}'\in \Gamma^t} Pr[\mathcal{X}_U=\mathbf{u}]\cdot  Pr[\overline{\mathcal{X}_U}=\mathbf{u}']\\
    \label{eq-use-uniform-U}
    & = \frac{1}{v^{s-t}}\cdot  \sum_{\mathbf{u}'\in \Gamma^t} Pr[\overline{\mathcal{X}_U}=\mathbf{u}']\\
    \nonumber
    & = \frac{1}{v^{s-t}}
\end{align}
where~\eqref{eq-use-unbiased-y-xU-bar} follows from the unbiased property (on $\overline{\mathcal{X}_U}$ and $\mathcal{Y}$) of the combinatorial $(t,s,v)$-AONT and hence the $(s-t)$-tuple $\mathbf{u}$ with $\mathcal{X}_U=\mathbf{u}$ is uniquely determined by $\overline{\mathcal{X}_U}=\mathbf{u}'$ and $\mathcal{Y}=\mathbf{v}$; \eqref{eq-use-mutually-independent} follows since 
$\mathcal{X}_U$ and $\overline{\mathcal{X}_U}$
are independent; and \eqref{eq-use-uniform-U} follows from the assumption that all inputs in $\mathcal{X}_U$ have uniform distribution.  Thus the claim follows.  

Combining \eqref{eq-lemma-depend} with \eqref{eq-claim-Hy}, we obtain 
\begin{align}
     H(\mathcal{X}|\mathcal{Y})  
     = H(X_1,\ldots,X_s) -  (s-t)\log(v) 
     = \min_{\mathcal{X}'\subseteq \{X_1,\ldots,X_s\},\atop |\mathcal{X}'|=t} H(\mathcal{X}'), 
\end{align}
as required. 
\end{proof}

A direct consequence of Theorem~\ref{thm-AONT-at-most-t-uniform-depend} is as follows. 
\begin{corollary}\label{coro-depend}
Under the assumption of Theorem~\ref{thm-AONT-at-most-t-uniform-depend}, 
if $|\mathcal{X}_0|=t$, 
then 
\begin{align*}
     H(\mathcal{X}|\mathcal{Y}) = H(\mathcal{X}_0). 
\end{align*}
\end{corollary}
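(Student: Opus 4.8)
\textbf{Proof proposal for Corollary~\ref{coro-depend}.}

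The plan is to specialize Theorem~\ref{thm-AONT-at-most-t-uniform-depend} to the boundary case $|\mathcal{X}_0| = t$. Recall that in that theorem we established, for any $t$ inputs $\mathcal{X}$ and any $s-t$ outputs $\mathcal{Y}$, the identity
\[
H(\mathcal{X}|\mathcal{Y}) = \min_{\mathcal{X}'\subseteq \{X_1,\ldots,X_s\},\, |\mathcal{X}'|=t} H(\mathcal{X}').
\]
So the only thing left to verify is that when $|\mathcal{X}_0| = t$, the minimum on the right-hand side is attained precisely at $\mathcal{X}' = \mathcal{X}_0$, i.e. that $H(\mathcal{X}_0) \le H(\mathcal{X}')$ for every $t$-subset $\mathcal{X}'$ of the inputs.

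First I would observe that since $|\mathcal{X}_0| = t$ and $\overline{\mathcal{X}_0} = \{X_1,\ldots,X_s\}\setminus\mathcal{X}_0$ has size $s-t$ consisting of mutually independent uniform inputs (each independent of everything else), the full entropy splits as
\[
H(X_1,\ldots,X_s) = H(\mathcal{X}_0) + H(\overline{\mathcal{X}_0}) = H(\mathcal{X}_0) + (s-t)\log(v).
\]
Now take an arbitrary $t$-subset $\mathcal{X}' \subseteq \{X_1,\ldots,X_s\}$ and write $\overline{\mathcal{X}'} = \{X_1,\ldots,X_s\}\setminus \mathcal{X}'$, which has size $s-t$. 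The complement $\overline{\mathcal{X}'}$ consists only of inputs lying in $\overline{\mathcal{X}_0}$ (since $\mathcal{X}_0$ has $t$ elements, any $(s-t)$-subset disjoint from at least one element of $\mathcal{X}_0$ must miss $\mathcal{X}_0$ entirely — more carefully, $\overline{\mathcal{X}'} \cap \mathcal{X}_0$ could be nonempty, so I should instead argue via subadditivity). The clean route: by subadditivity of entropy and the fact that every $X_i \in \overline{\mathcal{X}_0}$ has $H(X_i) = \log(v)$,
\[
H(\mathcal{X}') \ge H(X_1,\ldots,X_s) - H(\overline{\mathcal{X}'}) \ge H(X_1,\ldots,X_s) - \sum_{X \in \overline{\mathcal{X}'}} H(X) \ge H(X_1,\ldots,X_s) - (s-t)\log(v) = H(\mathcal{X}_0),
\]
where the first inequality is $H(\mathcal{X}',\overline{\mathcal{X}'}) \le H(\mathcal{X}') + H(\overline{\mathcal{X}'})$, the second is subadditivity again, and the third uses $H(X) \le \log(v)$ for every $X$. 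Hence $\min_{|\mathcal{X}'|=t} H(\mathcal{X}') = H(\mathcal{X}_0)$, and combining with Theorem~\ref{thm-AONT-at-most-t-uniform-depend} gives $H(\mathcal{X}|\mathcal{Y}) = H(\mathcal{X}_0)$.

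The main obstacle is essentially bookkeeping rather than a genuine difficulty: one must be careful that the chain of subadditivity inequalities above is not wasteful, i.e. that each step is actually an equality in the relevant direction so that $H(\mathcal{X}_0)$ really is the minimum and not merely a lower bound. This is guaranteed because $\mathcal{X}' = \mathcal{X}_0$ achieves it (take $\overline{\mathcal{X}'} = \overline{\mathcal{X}_0}$, whose entropy is exactly $(s-t)\log(v)$ and which is independent of $\mathcal{X}_0$, making all three inequalities tight), so no separate optimization argument is needed — the witness and the bound coincide. Everything else is a direct quotation of the preceding theorem.
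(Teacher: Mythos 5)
Your proof is correct and is essentially the paper's (implicit) argument: the corollary is stated there as a direct consequence of Theorem~\ref{thm-AONT-at-most-t-uniform-depend}, and your verification that $\min_{|\mathcal{X}'|=t}H(\mathcal{X}')=H(\mathcal{X}_0)$ — via $H(X_1,\ldots,X_s)=H(\mathcal{X}_0)+(s-t)\log(v)$ together with subadditivity and $H(X)\le\log(v)$, with $\mathcal{X}'=\mathcal{X}_0$ as the witness — is exactly the routine unpacking the paper leaves to the reader.
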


\begin{remark}
We would remark that the scenario in Corollary~\ref{coro-depend} was investigated in the context of ``randomized AONTs'' in \cite{ES2021+}. It is also worth noting that the conclusion  Corollary~\ref{coro-depend} here 
generalizes Theorem 3.1 in \cite{ES2021+} in the sense that only a special case $H(\mathcal{X}_0|\mathcal{Y}) = H(\mathcal{X}_0)$ is verified in \cite[Theorem 3.1]{ES2021+}. 
\end{remark}

\begin{remark}
Notice that all the inputs $s$-tuples are equally probable is equivalent to  all the $s$ inputs are mutually independent and with uniform distribution. 
As a direct consequence of Theorem~\ref{thm-AONT-at-most-t-uniform}, for a combinatorial $(t,s,v)$-AONT where all the $s$ inputs are equally probable, we have the perfect security property that $H(\mathcal{X}|\mathcal{Y})=H(\mathcal{X})=t\log(v)$ for any $t$ inputs $\mathcal{X}$ and any $s-t$ outputs $\mathcal{Y}$. This coincides with the conclusion in \cite[Theorem 2.3]{ES2021+}, where the necessary condition for the perfect security of combinatorial AONTs is also discussed. 
In addition,
Theorem~\ref{thm-AONT-at-most-t-uniform} works further for the case when the inputs $s$-tuples are not equally probable (see Example~\ref{example-aont-beyond-equi}). 
\end{remark}

\begin{example}\label{example-aont-beyond-equi}
Consider the combinatorial $(1,2,3)$-AONT as in Table~\ref{table-1}.
Suppose that
\begin{align*}
    Pr[X_1=a]=\frac{1}{3},\  Pr[X_1=b]=\frac{1}{3},\  Pr[X_1=c]=\frac{1}{3},\\ Pr[X_2=a]=\frac{1}{3},\  Pr[X_2=b]=\frac{1}{6},\  Pr[X_2=c]=\frac{1}{2}.
\end{align*}
Plugging into Table~\ref{table-1} gives 
\begin{align*}
    Pr[Y_1=a]=\frac{1}{3},\  Pr[Y_1=b]=\frac{1}{3},\  Pr[Y_1=c]=\frac{1}{3},\\ Pr[Y_2=a]=\frac{1}{3},\  Pr[Y_2=b]=\frac{1}{3},\  Pr[Y_2=c]=\frac{1}{3}.
\end{align*}
Then by the definition, it is easy to calculate 
\begin{align*}
    H(X_1)=1.584963, \ H(X_2)=1.459148,\ H(Y_1)=1.584963, \ H(Y_2)=1.584963.
\end{align*}
According to Lemma~\ref{lemma-symmetric-general} and \eqref{eq-thm-main-3}, we obtain 
\begin{align*}
    H(X_1|Y_1) &= H(X_2|Y_1)= H(X_1) + H(X_2) - H(Y_1)
    = 1.459148=\min\{H(X_1),H(X_2)\}, \\
    H(X_1|Y_2) &= H(X_2|Y_2)= H(X_1) + H(X_2) - H(Y_2) 
    = 1.459148=\min\{H(X_1),H(X_2)\}. 
\end{align*}
\end{example}

\section{Asymmetric AONT with independent inputs}
\label{sec-Asym-AONT}

In this section we investigate the security properties of combinatorial asymmetric AONT with independent inputs. 
In particular, we establish the following theorem.

\begin{theorem}\label{thm-main-Asymmetric} 
Let array $A\in \Gamma^{v^s\times 2s}$ be a combinatorial asymmetric $(t_i,t_o,s,v)$-AONT whose columns are with respect to random variables $X_1,\ldots,X_s,Y_1,\ldots,Y_s$ respectively. Let $P_1,\ldots,P_s$ be the corresponding probability distributions of $X_1,$ $\ldots,X_s$, which are  
mutually independent and take values from $\Gamma$. 
Then for any $t_i$ inputs $\mathcal{X}\subseteq \{X_1,\ldots,X_s\}$ such that $|\mathcal{X}|=t_i$ and any $s-t_o$ outputs $\mathcal{Y}\subseteq \{Y_1,\ldots,Y_s\}$ such that $|\mathcal{Y}|=s-t_o$, the followings hold.  
\begin{enumerate}
    \item[(1)] \begin{align}\label{eq-thm-Asy-main-1-ld}
        H(\mathcal{X}|\mathcal{Y})  \ge \max\bigg\{0,\sum_{i=1}^s H(X_i) - (s-t_i)\log(v) \bigg\}. 
    \end{align}
    \item[(2)] 
    \begin{align}\label{eq-thm-Asy-main-2}
        H(\mathcal{X}|\mathcal{Y}) 
        \le \min\bigg\{H(\mathcal{X}) , 
        & \min_{I\subseteq [s],\atop|I|=t_i}\, \sum_{i\in I} H(X_i) + (t_o-t_i)\log(v),\\
        &\min_{I\subseteq [s],\atop|I|=t_i}\, \sum_{i\in I} H(X_i) + s\cdot\log(v)-\sum^s_{i=1}H(X_i)
        \bigg\}.
    \end{align}
    
\end{enumerate}
\end{theorem}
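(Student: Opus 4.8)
The plan is to adapt the argument for the symmetric case (Theorem~\ref{thm-main-symmetric}), exploiting the observation recorded just after Definition~\ref{def-Asy-combinatorial-AONT}: a combinatorial asymmetric $(t_i,t_o,s,v)$-AONT is in particular a combinatorial asymmetric $(t_i,t_i,s,v)$-AONT, hence a combinatorial $(t_i,s,v)$-AONT, so all bounds proved in Theorem~\ref{thm-main-symmetric} with $t$ replaced by $t_i$ apply verbatim. In particular I first obtain, for any $t_i$ inputs $\mathcal{X}$ and any $s-t_i$ outputs $\mathcal{Y}_0$, the identity $H(\mathcal{X}\mid\mathcal{Y}_0)=\sum_{i=1}^sH(X_i)-H(\mathcal{Y}_0)$ from Lemma~\ref{lemma-symmetric-general}, together with the lower bound $H(\mathcal{X}\mid\mathcal{Y}_0)\ge\max\{0,\sum_iH(X_i)-(s-t_i)\log v\}$ and the upper bound $H(\mathcal{X}\mid\mathcal{Y}_0)\le\min\{H(\mathcal{X}),\min_{|I|=t_i}\sum_{i\in I}H(X_i)\}$.

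For part (1): given $s-t_o$ outputs $\mathcal{Y}$, I extend $\mathcal{Y}$ to a set $\mathcal{Y}_0$ of $s-t_i$ outputs by adjoining any $t_o-t_i$ further outputs. Since conditioning on more variables cannot increase entropy, $H(\mathcal{X}\mid\mathcal{Y})\ge H(\mathcal{X}\mid\mathcal{Y}_0)$, and the latter is $\ge\max\{0,\sum_iH(X_i)-(s-t_i)\log v\}$ by the symmetric lower bound with $t=t_i$. This immediately yields \eqref{eq-thm-Asy-main-1-ld}.

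For part (2) there are three quantities to dominate. The bound $H(\mathcal{X}\mid\mathcal{Y})\le H(\mathcal{X})$ is just $H(\mathcal{X}\mid\mathcal{Y})\le H(\mathcal{X})$, valid for any conditioning. For the second term, I again take $\mathcal{Y}_0\supseteq\mathcal{Y}$ with $|\mathcal{Y}_0|=s-t_i$ and write $H(\mathcal{X}\mid\mathcal{Y})\le H(\mathcal{X},\mathcal{Y}_0\setminus\mathcal{Y}\mid\mathcal{Y})=H(\mathcal{X}\mid\mathcal{Y}_0)+H(\mathcal{Y}_0\setminus\mathcal{Y}\mid\mathcal{Y})\le H(\mathcal{X}\mid\mathcal{Y}_0)+(t_o-t_i)\log v$, since $\mathcal{Y}_0\setminus\mathcal{Y}$ consists of $t_o-t_i$ variables each of entropy at most $\log v$; then the symmetric upper bound $H(\mathcal{X}\mid\mathcal{Y}_0)\le\min_{|I|=t_i}\sum_{i\in I}H(X_i)$ finishes this case. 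For the third term, I use $H(\mathcal{X}\mid\mathcal{Y})=H(\mathcal{X},\mathcal{Y})-H(\mathcal{Y})$; the array being unbiased on $\{1,\dots,s\}$ forces the input $s$-tuple, and hence the whole row, to be determined by $(X_1,\dots,X_s)$, so $H(\mathcal{X},\mathcal{Y})\le H(X_1,\dots,X_s)+H(\text{the }s-t_o\text{ outputs outside this row's constraint})$ — more carefully, $H(\mathcal{X},\mathcal{Y})\le H(\mathcal{X})+H(\mathcal{Y}\mid\mathcal{X})$ and I bound $H(\mathcal{Y}\mid\mathcal{X})$ against the $s-t_i$ remaining coordinates; combining with $H(\mathcal{Y})\ge\sum_iH(X_i)-H(\text{complement})$-type estimates and $H(\mathcal{X},\mathcal{Y})\le s\log v$ gives $H(\mathcal{X}\mid\mathcal{Y})\le s\log v-H(\mathcal{Y})$, and bounding $H(\mathcal{Y})$ below by $\sum_iH(X_i)-\min_{|I|=t_i}\sum_{i\in I}H(X_i)$ via the claim \eqref{eq--thm5.2-proof-claim-rhs} (applied with $t=t_i$, noting $\mathcal{Y}$ here has fewer coordinates so $H(\mathcal{Y})$ is smaller, which is why the complementary counting is needed) produces the stated bound. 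Taking the minimum over the three completes the proof.

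The main obstacle will be the third term of part (2): unlike the symmetric case, $\mathcal{Y}$ has only $s-t_o<s-t_i$ coordinates, so the clean determination ``$\overline{\mathcal{X}_{\max}}$ and $\mathcal{Y}$ determine $\mathcal{X}_{\max}$'' from the proof of Theorem~\ref{thm-main-symmetric} no longer applies directly, and I must instead route the estimate through the total entropy $H(X_1,\dots,X_s)\le s\log v$ and a careful accounting of which coordinates the unbiasedness condition (3) of Definition~\ref{def-Asy-combinatorial-AONT} actually pins down. Getting the bookkeeping right so that the three upper estimates combine into exactly the right-hand side of \eqref{eq-thm-Asy-main-2} — in particular verifying that no better-than-stated bound is being silently claimed — is where the real care is needed.
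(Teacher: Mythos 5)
Your reduction to the symmetric case rests on a false structural claim. The paper's remark after Definition~\ref{def-Asy-combinatorial-AONT} goes in the \emph{opposite} direction: a combinatorial $(t,s,v)$-AONT is an asymmetric $(t_i,t,s,v)$-AONT for $t_i\le t$, because unbiasedness with respect to a set of columns passes to subsets. An asymmetric $(t_i,t_o,s,v)$-AONT is \emph{not} in general a combinatorial $(t_i,s,v)$-AONT: its condition (3) only constrains $t_i$ inputs together with $s-t_o$ outputs, and unbiasedness on these smaller column sets does not imply unbiasedness on the larger sets of $t_i$ inputs plus $s-t_i$ outputs. A concrete counterexample is the identity transform $y_i=x_i$ on $\Gamma^2$: it is a combinatorial asymmetric $(1,2,2,v)$-AONT (conditions (1), (2) hold, and condition (3) only asks each single input column to be balanced), yet the column pair $(X_1,Y_1)$ contains only tuples $(x,x)$, so it is not a $(1,2,v)$-AONT. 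Because of this, every inequality you import from Theorem~\ref{thm-main-symmetric} with ``$t=t_i$'' is unjustified, and some of your intermediate steps are outright false: for part (1), $H(\mathcal{X}\mid\mathcal{Y}_0)\ge\sum_i H(X_i)-(s-t_i)\log(v)$ fails for the identity example with $\mathcal{X}=\{X_1\}$, $\mathcal{Y}_0=\{Y_1\}$ and uniform inputs (left side $0$, right side $\log(v)$); for the second term of (2), $H(\mathcal{X}\mid\mathcal{Y}_0)\le\min_{|I|=t_i}\sum_{i\in I}H(X_i)$ fails for the same array with $\mathcal{Y}_0=\{Y_2\}$ and $H(X_1)>H(X_2)$; and for the third term, your claimed lower bound $H(\mathcal{Y})\ge\sum_i H(X_i)-\min_{|I|=t_i}\sum_{i\in I}H(X_i)$ with $|\mathcal{Y}|=s-t_o$ is impossible already for uniform inputs, since it would assert $(s-t_o)\log(v)\ge(s-t_i)\log(v)$ while $t_i<t_o$. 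Indeed, \eqref{eq--thm5.2-proof-claim-rhs} is proved only for output sets of size $s-t_i$ in a symmetric AONT and cannot be transferred to the smaller set $\mathcal{Y}$ by monotonicity, which goes the wrong way.

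The paper's proof therefore cannot be bypassed by the symmetric theorem: it works directly with the asymmetric unbiasedness through the fibres $\mathcal{U}'_{\mathbf{u},\mathbf{v}}$ of \eqref{def-eq-U'}, whose exact size $v^{t_o-t_i}$ is what replaces the ``unique determination'' used in the symmetric case. From this it proves Lemma~\ref{lemma-Asymmetric-general}, namely $\sum_{i\in[s]}H(X_i)-(t_o-t_i)\log(v)-H(\mathcal{Y})\le H(\mathcal{X}\mid\mathcal{Y})\le\min\{\sum_{i\in[s]}H(X_i),\,(s+t_i-t_o)\log(v)\}-H(\mathcal{Y})$ (via Jensen and the log-sum inequality), and separately the estimate $H(\mathcal{Y})\ge H(\mathcal{X}_{\max})-(t_o-t_i)\log(v)$, where $\mathcal{X}_{\max}$ collects the $s-t_i$ largest input entropies; combining these yields both displayed bounds, with the $(s+t_i-t_o)\log(v)$ support bound (not your weaker $s\log(v)$) being essential for the third term of \eqref{eq-thm-Asy-main-2}. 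Your sketch of that third term is also the part you yourself flag as incomplete, and as written it routes through the false $H(\mathcal{Y})\ge H(\mathcal{X}_{\max})$ estimate, so the gap is genuine: you are missing the fibre-counting argument that quantifies how much the asymmetric condition pins down, which is the actual content of the paper's proof.
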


We would remark that Theorem~\ref{thm-main-Asymmetric} for combinatorial  asymmetric $(t_i,t_o,s,v)$-AONTs can be seen as a generalization of Theorem~\ref{thm-main-symmetric} for combinatorial $(t,s,v)$-AONTs in the sense that 
the upper and lower bounds in Theorem~\ref{thm-main-Asymmetric} could deduce the bounds in Theorem~\ref{thm-main-symmetric} by letting $t_i=t_o=t$. 
However, when $t_i<t_o$, the property (1) of Theorem~\ref{thm-main-symmetric} does not hold in general for combinatorial  asymmetric $(t_i,t_o,s,v)$-AONTs (see Example~\ref{example-asy}). 
Furthermore, 
in contrast to Theorem~\ref{thm-main-symmetric}, the quantification of $H(\mathcal{X}|\mathcal{Y})$ for combinatorial  asymmetric $(t_i,t_o,s,v)$-AONTs  cannot be upper bounded by $\min\limits_{\mathcal{X}'}H(\mathcal{X}')$ in general (see Example~\ref{example-asy}). 

\subsection{A general lemma}
In order to prove Theorem~\ref{thm-main-Asymmetric}, we first prove the following lemma. 

\begin{lemma}\label{lemma-Asymmetric-general}
Under the assumption of Theorem~\ref{thm-main-Asymmetric}, for any $t_i$ inputs $\mathcal{X}\subseteq \{X_1,\ldots,$ $X_s\}$ such that $|\mathcal{X}|=t_i$ and any $s-t_o$ outputs $\mathcal{Y}\subseteq \{Y_1,\ldots,Y_s\}$ such that $|\mathcal{Y}|=s-t_o$, we have 
\begin{align}
     \label{lemma-general-ineq-L}
     H(\mathcal{X}|\mathcal{Y})
     &\ge 
     \sum_{i\in [s]} H(X_i) - (t_o-t_i)\log(v)- H(\mathcal{Y}),\\ \label{lemma-general-ineq-R}
     H(\mathcal{X}|\mathcal{Y}) 
     &\le 
     \min\bigg\{
     \sum_{i\in [s]} H(X_i) - H(\mathcal{Y}),
     (s+t_i-t_o)\log(v)  - H(\mathcal{Y})\bigg\}.
\end{align}
\end{lemma}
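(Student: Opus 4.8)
The plan is to decompose $H(\mathcal{X}|\mathcal{Y})$ by introducing the "complementary" input variables, exactly as in the proof of Lemma~\ref{lemma-symmetric-general}, and then control the extra slack coming from the fact that now $|\mathcal{X}| = t_i < t_o$, so that $|\mathcal{X}| + |\mathcal{Y}| = t_i + (s - t_o) = s - (t_o - t_i) < s$. Write $\overline{\mathcal{X}} = \{X_1,\ldots,X_s\}\setminus\mathcal{X}$ with $|\overline{\mathcal{X}}| = s - t_i$. First I would expand $H(\mathcal{X}|\mathcal{Y}) = H(\mathcal{X},\mathcal{Y}) - H(\mathcal{Y})$ and compare $H(\mathcal{X},\mathcal{Y})$ with $H(X_1,\ldots,X_s) = H(\mathcal{X},\overline{\mathcal{X}})$. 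Since the columns of $\mathcal{X}\cup\mathcal{Y}$ span only $t_i + (s-t_o)$ columns, the combinatorial asymmetric AONT is unbiased on $\mathcal{X}\cup\mathcal{Y}$ (any such set is contained in some unbiased set of type (3) together with extra output columns, or one argues directly via the defining unbiased families), so each realization $(\mathcal{X}=\mathbf u, \mathcal{Y}=\mathbf v)$ of positive probability corresponds to exactly $v^{t_o - t_i}$ rows of $A$, hence to $v^{t_o-t_i}$ values of $\overline{\mathcal{X}}$ completing it. This is the key structural input and I expect it to be the main obstacle: I need to justify carefully that the relevant projections are unbiased (equivalently, that the fiber over $(\mathbf u,\mathbf v)$ has constant size $v^{t_o - t_i}$), using only Definition~\ref{def-Asy-combinatorial-AONT}.

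Granting that, $H(\mathcal{X},\overline{\mathcal{X}})$ and $H(\mathcal{X},\mathcal{Y})$ differ by the conditional entropy $H(\overline{\mathcal{X}} \mid \mathcal{X}, \mathcal{Y})$, which is a convex combination of entropies of distributions supported on sets of size $v^{t_o - t_i}$, hence lies in $[0, (t_o - t_i)\log v]$. Therefore
\[
H(X_1,\ldots,X_s) - (t_o - t_i)\log v \;\le\; H(\mathcal{X},\mathcal{Y}) \;\le\; H(X_1,\ldots,X_s),
\]
and subtracting $H(\mathcal{Y})$ together with mutual independence $H(X_1,\ldots,X_s) = \sum_{i\in[s]} H(X_i)$ yields the bound \eqref{lemma-general-ineq-L} and the first term in the minimum of \eqref{lemma-general-ineq-R}.

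For the second term in \eqref{lemma-general-ineq-R}, I would bound $H(\mathcal{X},\mathcal{Y})$ by the log of the size of its support: the pair $(\mathcal{X},\mathcal{Y})$ ranges over at most $v^{t_i + (s - t_o)} = v^{s + t_i - t_o}$ tuples, so $H(\mathcal{X},\mathcal{Y}) \le (s + t_i - t_o)\log v$, and again subtracting $H(\mathcal{Y})$ gives $H(\mathcal{X}|\mathcal{Y}) \le (s + t_i - t_o)\log v - H(\mathcal{Y})$. Taking the minimum of the two upper estimates finishes \eqref{lemma-general-ineq-R}. The routine parts are the entropy identities and the $[0,(t_o-t_i)\log v]$ bound on a conditional entropy; the one place needing care is the unbiasedness/fiber-size claim, which I would state as a short preliminary observation before the chain of (in)equalities.
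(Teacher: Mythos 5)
Your proposal is correct, and it reaches the two bounds by a genuinely cleaner route than the paper. The structural core is the same in both: by item (3) of Definition~\ref{def-Asy-combinatorial-AONT}, the array is unbiased with respect to the columns of $\mathcal{X}\cup\mathcal{Y}$ (this set is \emph{exactly} of the form $I\cup J$ with $|I|=t_i$, $|J|=s-t_o$, so the "main obstacle" you flag is immediate -- no containment argument is needed), and since the first $s$ columns contain every input $s$-tuple exactly once, the $v^{t_o-t_i}$ rows over each pair $(\mathbf{u},\mathbf{v})$ correspond to $v^{t_o-t_i}$ distinct values of $\overline{\mathcal{X}}$; this is precisely the paper's set $\mathcal{U}'_{\mathbf{u},\mathbf{v}}$ of size $v^{t_o-t_i}$. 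Where you diverge is in how this is converted into entropy bounds: the paper expands $H(\mathcal{X},\mathcal{Y})$ as an explicit sum over $(\mathbf{u},\mathbf{v})$, factors the probabilities via independence, and invokes Jensen's inequality (Lemma~\ref{lemma-Jensen-ineq}) for the lower bound, monotonicity of $\log$ for the first upper bound, and the log-sum inequality (Lemma~\ref{lemma-log-sum}) for the second; you instead use the chain-rule identity $H(\mathcal{X},\overline{\mathcal{X}})=H(\mathcal{X},\mathcal{Y})+H(\overline{\mathcal{X}}\mid\mathcal{X},\mathcal{Y})$ -- valid because $H(\mathcal{Y}\mid X_1,\ldots,X_s)=0$, a step you should state explicitly -- together with $0\le H(\overline{\mathcal{X}}\mid\mathcal{X},\mathcal{Y})\le(t_o-t_i)\log(v)$ from the fiber size, and the trivial support bound $H(\mathcal{X},\mathcal{Y})\le(s+t_i-t_o)\log(v)$. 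Your version buys brevity and conceptual transparency (it needs independence only at the very end, to write $H(X_1,\ldots,X_s)=\sum_{i\in[s]}H(X_i)$, and avoids both auxiliary inequalities); the paper's computational version buys reusability, since the same explicit sums are recycled almost verbatim, with $|\mathcal{U}'_{\mathbf{u},\mathbf{v}}|$ re-estimated by the covering bound, in the proofs of Theorem~\ref{thm-main-Asymmetric} and Lemma~\ref{lemma-Asymmetric-general-weak}.
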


The following log-sum inequality will be exploited. 

\begin{lemma}[\cite{CT-book}] \label{lemma-log-sum}
For positive numbers $a_1,a_2,\ldots,a_n$ and $b_1,b_2,\ldots,b_n$, 
\begin{align*}
    \sum_{i=1}^n a_i \log \frac{a_i}{b_i} \ge 
    \bigg(\sum_{i=1}^n a_i\bigg) \log\frac{\sum_{i=1}^n a_i}{\sum_{i=1}^n b_i}
\end{align*}
with equality if and only if $\frac{a_i}{b_i}=constant$. 
\end{lemma}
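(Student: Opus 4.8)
The plan is to obtain the log-sum inequality directly from the Jensen-type inequality already recorded as Lemma~\ref{lemma-Jensen-ineq}, via a normalization that turns the $b_i$ into a probability vector. Both statements rest on the convex function $f(x)=x\log(x)$, so the log-sum bound is essentially a repackaging of convexity once the right weights are chosen. Concretely, I would set $B=\sum_{i=1}^n b_i>0$ and $A=\sum_{i=1}^n a_i$, and define
\[
    m_i=\frac{b_i}{B},\qquad x_i=\frac{a_i}{b_i}\qquad (1\le i\le n).
\]
Since all $a_i,b_i$ are positive, each $m_i>0$, $\sum_{i=1}^n m_i=1$, and each $x_i>0$, so the hypotheses of Lemma~\ref{lemma-Jensen-ineq} are satisfied.

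Applying Lemma~\ref{lemma-Jensen-ineq} to these $m_i$ and $x_i$ gives
\[
    \Bigg(\sum_{i=1}^n m_i x_i\Bigg)\log\Bigg(\sum_{i=1}^n m_i x_i\Bigg)\le \sum_{i=1}^n m_i x_i \log(x_i).
\]
I would then simplify both sides using the definitions: on the left, $\sum_{i=1}^n m_i x_i=\tfrac{1}{B}\sum_{i=1}^n a_i=A/B$, and on the right, $\sum_{i=1}^n m_i x_i \log(x_i)=\tfrac{1}{B}\sum_{i=1}^n a_i\log(a_i/b_i)$. Substituting these into the displayed inequality and multiplying through by $B>0$ yields exactly
\[
    \Bigg(\sum_{i=1}^n a_i\Bigg)\log\frac{\sum_{i=1}^n a_i}{\sum_{i=1}^n b_i}\le \sum_{i=1}^n a_i\log\frac{a_i}{b_i},
\]
which is the claimed bound. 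For the equality characterization I would invoke the corresponding clause of Lemma~\ref{lemma-Jensen-ineq}: because every $m_i>0$, equality holds if and only if the $x_i=a_i/b_i$ are all equal, i.e. $a_i/b_i$ is constant, matching the stated condition.

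There is no genuine obstacle here; the single substantive step is recognizing the substitution $m_i=b_i/B$, $x_i=a_i/b_i$ that converts the weighted Jensen inequality into log-sum form, after which everything is algebraic bookkeeping. If a self-contained argument were preferred over citing Lemma~\ref{lemma-Jensen-ineq}, the same computation can be run directly from the strict convexity of $f(x)=x\log(x)$ (here $f''(x)=1/(x\ln 2)>0$) by applying Jensen's inequality to the probability weights $m_i$; but reusing Lemma~\ref{lemma-Jensen-ineq} keeps the proof shortest and re-derives the equality condition for free.
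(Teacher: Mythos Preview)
Your derivation is correct: the substitution $m_i=b_i/B$, $x_i=a_i/b_i$ is exactly the standard way to pass from the weighted Jensen inequality for $f(x)=x\log x$ to the log-sum inequality, and your handling of the equality case is also correct since every $m_i>0$.

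There is, however, nothing to compare against in the paper. Lemma~\ref{lemma-log-sum} is not proved there; it is simply quoted from Cover and Thomas as a standard fact. What you have supplied is a short proof that makes explicit the (well-known) observation that Lemma~\ref{lemma-log-sum} and Lemma~\ref{lemma-Jensen-ineq} are two formulations of the same convexity statement, so that once the paper has recorded Lemma~\ref{lemma-Jensen-ineq} it does not strictly need to import Lemma~\ref{lemma-log-sum} separately. That is a perfectly reasonable addition, but it goes beyond what the paper itself does.
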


\begin{proof}[Proof of Lemma~\ref{lemma-Asymmetric-general}] 
Let $\overline{\mathcal{X}} = \{X_1,\ldots,X_s\}\setminus \mathcal{X}$. 
For any $\mathbf{u}\in \Gamma^{t_i},  \mathbf{v}\in \Gamma^{s-t_o}$, denote 
\begin{align}\label{def-eq-U'}
    \mathcal{U}'_{\mathbf{u},\mathbf{v}} := \{\mathbf{u}'\in \Gamma^{s-t_i}: 
    \exists\ \text{a row in array $A$ such that}\ \mathcal{X}=\mathbf{u}, \overline{\mathcal{X}}=\mathbf{u}', \mathcal{Y}=\mathbf{v} \}.
\end{align} 
According to the unbiased property of array $A$, it holds that 
\begin{align}\label{eq-U'-size}
    |\mathcal{U}'_{\mathbf{u},\mathbf{v}}| 
    =v^{t_o-t_i}.
\end{align} 

By the definition, we have 
\begin{align}
    \nonumber
    H(\mathcal{X},\mathcal{Y}) 
    &= -\sum_{\mathbf{u}\in \Gamma^{t_i},  \mathbf{v}\in \Gamma^{s-t_o}} Pr[\mathcal{X}=\mathbf{u},\mathcal{Y}=\mathbf{v}] \log \big(Pr[\mathcal{X}=\mathbf{u},\mathcal{Y}=\mathbf{v}]\big)\\
    \label{eq-y-to-x-bar-Asy}
    &= -\sum_{\mathbf{u}\in \Gamma^{t_i},  \mathbf{v}\in \Gamma^{s-t_o}} \bigg(\sum_{\mathbf{u}'\in \mathcal{U}'_{\mathbf{u},\mathbf{v}}}Pr[\mathcal{X}=\mathbf{u},\overline{\mathcal{X}}=\mathbf{u}'] \bigg)
    \cdot 
    \log \bigg(\sum_{\mathbf{u}'\in \mathcal{U}'_{\mathbf{u},\mathbf{v}}}Pr[\mathcal{X}=\mathbf{u},\overline{\mathcal{X}}=\mathbf{u}'] \bigg)\\
    \nonumber
    & = 
     -\sum_{\mathbf{u}\in \Gamma^{t_i},  \mathbf{v}\in \Gamma^{s-t_o}} \bigg(\sum_{\mathbf{u}'\in \mathcal{U}'_{\mathbf{u},\mathbf{v}}}Pr[\mathcal{X}=\mathbf{u}]\cdot Pr[\overline{\mathcal{X}}=\mathbf{u}'] \bigg)
    \cdot 
    \log \bigg(\sum_{\mathbf{u}'\in \mathcal{U}'_{\mathbf{u},\mathbf{v}}}Pr[\mathcal{X}=\mathbf{u}]\cdot Pr[\overline{\mathcal{X}}=\mathbf{u}'] \bigg)\\
    \label{eq-thm-main2-lemma-divide-2cases-1}
    & = 
     -\sum_{\mathbf{u}\in \Gamma^{t_i},  \mathbf{v}\in \Gamma^{s-t_o}} 
     Pr[\mathcal{X}=\mathbf{u}]
     \bigg(\sum_{\mathbf{u}'\in \mathcal{U}'_{\mathbf{u},\mathbf{v}}}Pr[\overline{\mathcal{X}}=\mathbf{u}'] \bigg)
    \log \bigg(\sum_{\mathbf{u}'\in \mathcal{U}'_{\mathbf{u},\mathbf{v}}}Pr[\overline{\mathcal{X}}=\mathbf{u}'] \bigg)\\
    \label{eq-thm-main2-lemma-divide-2cases-2} 
    & \qquad - 
    \sum_{\mathbf{u}\in \Gamma^{t_i}} 
     Pr[\mathcal{X}=\mathbf{u}] \log \big(Pr[\mathcal{X}=\mathbf{u}]\big)
     \bigg(\sum_{\mathbf{v}\in \Gamma^{s-t_o},\mathbf{u}'\in \mathcal{U}'_{\mathbf{u},\mathbf{v}}}
     Pr[\overline{\mathcal{X}}=\mathbf{u}'] \bigg)
\end{align}
where~\eqref{eq-y-to-x-bar-Asy} follows from the unbiased property (on $\mathcal{X}$ and $\mathcal{Y}$) of the combinatorial asymmetric $(t_i,t_o,s,v)$-AONT. Next we estimate the two items \eqref{eq-thm-main2-lemma-divide-2cases-1} and  \eqref{eq-thm-main2-lemma-divide-2cases-2} as follows. 
\begin{align}
    \nonumber 
    \eqref{eq-thm-main2-lemma-divide-2cases-1} 
    \nonumber 
    & = 
    -\sum_{\mathbf{u}\in \Gamma^{t_i},  \mathbf{v}\in \Gamma^{s-t_o}} 
     Pr[\mathcal{X}=\mathbf{u}]\cdot 
     |\mathcal{U}'_{\mathbf{u},\mathbf{v}}|
     \cdot  
     \bigg(\sum_{\mathbf{u}'\in \mathcal{U}'_{\mathbf{u},\mathbf{v}}}
     \frac{1}{|\mathcal{U}'_{\mathbf{u},\mathbf{v}}| } Pr[\overline{\mathcal{X}}=\mathbf{u}'] \bigg)
    \log \bigg(\sum_{\mathbf{u}'\in \mathcal{U}'_{\mathbf{u},\mathbf{v}}}
    \frac{1}{|\mathcal{U}'_{\mathbf{u},\mathbf{v}}| }
    Pr[\overline{\mathcal{X}}=\mathbf{u}'] \bigg) \\
    \nonumber 
    & \qquad 
    - \sum_{\mathbf{u}\in \Gamma^{t_i},  \mathbf{v}\in \Gamma^{s-t_o}} 
     Pr[\mathcal{X}=\mathbf{u}]
     \bigg(\sum_{\mathbf{u}'\in \mathcal{U}'_{\mathbf{u},\mathbf{v}}}Pr[\overline{\mathcal{X}}=\mathbf{u}'] \bigg)
    \log \Big( |\mathcal{U}'_{\mathbf{u},\mathbf{v}}| \Big)\\
    \label{eq-lemma-first-case-subcase-1}
    & \ge 
    -\sum_{\mathbf{u}\in \Gamma^{t_i},  \mathbf{v}\in \Gamma^{s-t_o}} 
     Pr[\mathcal{X}=\mathbf{u}] \cdot 
     |\mathcal{U}'_{\mathbf{u},\mathbf{v}}|
     \cdot  
     \bigg(\sum_{\mathbf{u}'\in \mathcal{U}'_{\mathbf{u},\mathbf{v}}}
     \frac{1}{|\mathcal{U}'_{\mathbf{u},\mathbf{v}}| } Pr[\overline{\mathcal{X}}=\mathbf{u}']
    \log \big(
    Pr[\overline{\mathcal{X}}=\mathbf{u}'] \big) \bigg)\\
    \label{eq-lemma-first-case-subcase-2}
    & \qquad 
    - (t_0-t_i)\log(v) \sum_{\mathbf{u}\in \Gamma^{t_i},  \mathbf{u}'\in \Gamma^{s-t_i}} 
     Pr[\mathcal{X}=\mathbf{u}]Pr[\overline{\mathcal{X}}=\mathbf{u}']\\ 
    \label{eq-lemma-first-case-subcase-1-1}
    & = 
    -\sum_{\mathbf{u}\in \Gamma^{t_i}} 
     Pr[\mathcal{X}=\mathbf{u}] 
     \sum_{\mathbf{u}'\in \Gamma^{s-t_i}}
      Pr[\overline{\mathcal{X}}=\mathbf{u}']
    \log \big(
    Pr[\overline{\mathcal{X}}=\mathbf{u}'] \big) - (t_0-t_i)\log(v)\\
    \label{eq-lemma-first-case-1-final} 
    & = 
    H(\overline{\mathcal{X}})- (t_0-t_i)\log(v)
\end{align}
where \eqref{eq-lemma-first-case-subcase-1} follows from Lemma~\ref{lemma-Jensen-ineq} with $\sum\limits_{\mathbf{u}'\in \mathcal{U}'_{\mathbf{u},\mathbf{v}}}
\frac{1}{|\mathcal{U}'_{\mathbf{u},\mathbf{v}}| }=1$ and $Pr[\overline{\mathcal{X}}=\mathbf{u}']\ge 0$; \eqref{eq-lemma-first-case-subcase-2} and \eqref{eq-lemma-first-case-subcase-1-1} follow from \eqref{def-eq-U'} and \eqref{eq-U'-size}. Also 
\begin{align}
    \label{eq-lemma-first-case-2-final} 
    \eqref{eq-thm-main2-lemma-divide-2cases-2} 
    & = 
    - \sum_{\mathbf{u}\in \Gamma^{t_i}} 
     Pr[\mathcal{X}=\mathbf{u}] \log \big(Pr[\mathcal{X}=\mathbf{u}]\big)
     \bigg(\sum_{\mathbf{u}'\in \Gamma^{s-t_i}}
     Pr[\overline{\mathcal{X}}=\mathbf{u}'] \bigg)
     = H(\mathcal{X}). 
\end{align}
Plugging \eqref{eq-lemma-first-case-1-final} and \eqref{eq-lemma-first-case-2-final} into \eqref{eq-thm-main2-lemma-divide-2cases-1} and \eqref{eq-thm-main2-lemma-divide-2cases-2} yields 
\begin{align*}
    H(\mathcal{X},\mathcal{Y})  \ge 
    H(\overline{\mathcal{X}})- (t_0-t_i)\log(v) + H(\mathcal{X})
    = 
    \sum_{i\in [s]} H(X_i) - (t_o-t_i)\log(v)
\end{align*}
and hence 
\begin{align*}
     H(\mathcal{X}|\mathcal{Y}) 
     = H(\mathcal{X},\mathcal{Y})  - H(\mathcal{Y}) 
     \ge 
     \sum_{i\in [s]} H(X_i) - (t_o-t_i)\log(v) - H(\mathcal{Y}),
\end{align*}
implying the lower bound \eqref{lemma-general-ineq-L}.

Next we verify the upper bound \eqref{lemma-general-ineq-R}. 
First, by \eqref{eq-y-to-x-bar-Asy} and the 
monotonically increasing property of $\log_2(\cdot)$ function, we have 
\begin{align*}
    H(\mathcal{X},\mathcal{Y})
    &\le 
    -\sum_{\mathbf{u}\in \Gamma^{t_i},  \mathbf{v}\in \Gamma^{s-t_o}} 
    \sum_{\mathbf{u}'\in \mathcal{U}'_{\mathbf{u},\mathbf{v}}}Pr[\mathcal{X}=\mathbf{u},\overline{\mathcal{X}}=\mathbf{u}'] 
    \log \Big(Pr[\mathcal{X}=\mathbf{u},\overline{\mathcal{X}}=\mathbf{u}'] \Big)\\
    & = 
        -\sum_{\mathbf{u}\in \Gamma^{t_i},  \mathbf{u}'\in \Gamma^{s-t_i}} Pr[\mathcal{X}=\mathbf{u},\overline{\mathcal{X}}=\mathbf{u}'] 
    \log \Big(Pr[\mathcal{X}=\mathbf{u},\overline{\mathcal{X}}=\mathbf{u}'] \Big)\\
    & = H(\mathcal{X},\overline{\mathcal{X}}), 
\end{align*}
implying 
\begin{align}\label{eq-lemma-asy-general-upper-1}
     H(\mathcal{X}|\mathcal{Y}) 
     \le H(\mathcal{X},\overline{\mathcal{X}})  - H(\mathcal{Y}) 
     = 
     \sum_{i\in [s]} H(X_i) - H(\mathcal{Y}).
\end{align}
Also, from \eqref{eq-y-to-x-bar-Asy} together with Lemma~\ref{lemma-log-sum} we have 
\begin{align*}
    H(\mathcal{X},\mathcal{Y})  
    &\le 
    -\sum_{\mathbf{u}\in \Gamma^{t_i},  \mathbf{v}\in \Gamma^{s-t_o}} 
    \sum_{\mathbf{u}'\in \mathcal{U}'_{\mathbf{u},\mathbf{v}}}Pr[\mathcal{X}=\mathbf{u},\overline{\mathcal{X}}=\mathbf{u}'] 
   \cdot 
    \log 
    \frac{\sum\limits_{\mathbf{u}\in \Gamma^{t_i},  \mathbf{v}\in \Gamma^{s-t_o}} 
    \sum\limits_{\mathbf{u}'\in \mathcal{U}'_{\mathbf{u},\mathbf{v}}}Pr[\mathcal{X}=\mathbf{u},\overline{\mathcal{X}}=\mathbf{u}'] }{\sum\limits_{\mathbf{u}\in \Gamma^{t_i},  \mathbf{v}\in \Gamma^{s-t_o}} 1}\\
    & = 
        (s+t_i-t_o)\log(v)
\end{align*}
which implies  
\begin{align}
\label{eq-lemma-asy-general-upper-2}
     H(\mathcal{X}|\mathcal{Y}) 
     \le (s+t_i-t_o)\log(v)  - H(\mathcal{Y}).
\end{align}
Combining \eqref{eq-lemma-asy-general-upper-1} and \eqref{eq-lemma-asy-general-upper-2} gives the upper bound.  
This completes the proof.
\end{proof}

It is worth noting that when $t_i=t_o$, the above Lemma~\ref{lemma-Asymmetric-general} implies Lemma~\ref{lemma-symmetric-general}. In other words, the lower and upper bounds in Lemma~\ref{lemma-Asymmetric-general} turn out to be tight in certain cases.

\subsection{Proof of Theorem~\ref{thm-main-Asymmetric}} 

\begin{proof}[Proof of Theorem~\ref{thm-main-Asymmetric}]

(1)
According to Lemma~\ref{lemma-Asymmetric-general}, we have 
\begin{align*}
    H(\mathcal{X}|\mathcal{Y})  
    &\ge 
    \sum_{i\in [s]} H(X_i) - (t_o-t_i)\log(v)- H(\mathcal{Y})\\ &\ge 
    \sum_{i\in [s]} H(X_i) - (t_o-t_i)\log(v)- (s-t_o)\log(v) \\
   & = 
    \sum_{i\in [s]} H(X_i) - (s-t_i)\log(v)
\end{align*}
where the second inequality follows from the fact $H(Y,Y')\le H(Y)+H(Y')$ and $H(Y)\le \log(v)$ for any output $Y$. 
Together with the non-negativity of entropy, the lower bound \eqref{eq-thm-Asy-main-1-ld} follows.

(2)
Recall that $X_1,\ldots,X_s$ are mutually independent. 
Let $\mathcal{X}_{\max}\subseteq \{X_1,\ldots,X_s\}$ such that $|\mathcal{X}_{\max}|=s-t_i$ denote a collection of input random variables according to the largest $s-t_i$ entropy values among $H(X_1),\ldots,H(X_s)$.  
Denote $\overline{\mathcal{X}_{\max}} = \{X_1,\ldots,X_s\} \setminus \mathcal{X}_{\max}$. Clearly, $|\overline{\mathcal{X}_{\max}}|=t_i$.
In order to derive an upper bound on $H(\mathcal{X}|\mathcal{Y})$ based on Lemma~\ref{lemma-Asymmetric-general}, we need to evaluate on $H(\mathcal{Y})$. 
\begin{align}
    \nonumber 
    &
    H(\mathcal{Y}) \\
    \nonumber 
    & = -\sum_{\mathbf{v}\in \Gamma^{s-t_o}} Pr[\mathcal{Y}=\mathbf{v}] \log  \big(Pr[\mathcal{Y}=\mathbf{v}]\big)\\
    \label{eq-thm5.2-proof-use-CANOT-1}
    & = -\sum_{\mathbf{v}\in \Gamma^{s-t_o}} 
       \Bigg( \sum_{\mathbf{u}\in \Gamma^{t_i}} \sum_{\mathbf{u}'\in \mathcal{U}'_{\mathbf{u},\mathbf{v}} } Pr[\overline{\mathcal{X}_{\max}}=\mathbf{u},\mathcal{X}_{\max}=\mathbf{u}'] \Bigg) 
      \cdot 
       \log 
       \Bigg( \sum_{\mathbf{u}\in \Gamma^{t_i}} \sum_{\mathbf{u}'\in \mathcal{U}'_{\mathbf{u},\mathbf{v}} } Pr[\overline{\mathcal{X}_{\max}}=\mathbf{u},\mathcal{X}_{\max}=\mathbf{u}'] \Bigg) \\
     \nonumber 
     &
     = 
       -\sum_{\mathbf{v}\in \Gamma^{s-t_o}} 
       \Bigg( 
       \sum_{\mathbf{u}\in \Gamma^{t_i}} 
       Pr[\overline{\mathcal{X}_{\max}}=\mathbf{u}]
       \sum_{\mathbf{u}'\in \mathcal{U}'_{\mathbf{u},\mathbf{v}} } Pr[\mathcal{X}_{\max}=\mathbf{u}'] \Bigg) 
    \cdot 
       \log 
       \Bigg( 
       \sum_{\mathbf{u}\in \Gamma^{t_i}} 
       Pr[\overline{\mathcal{X}_{\max}}=\mathbf{u}]
       \sum_{\mathbf{u}'\in \mathcal{U}'_{\mathbf{u},\mathbf{v}} } Pr[\mathcal{X}_{\max}=\mathbf{u}'] \Bigg) \\
    \label{eq-proof-thm-main-2-use-Jensen-1}
    &  
    \ge 
        -\sum_{\mathbf{v}\in \Gamma^{s-t_o}} 
       \Bigg( 
       \sum_{\mathbf{u}\in \Gamma^{t_i}} 
       Pr[\overline{\mathcal{X}_{\max}}=\mathbf{u}]
    \cdot        
       \bigg( 
       \sum_{\mathbf{u}'\in \mathcal{U}'_{\mathbf{u},\mathbf{v}} } Pr[\mathcal{X}_{\max}=\mathbf{u}'] 
       \bigg)
       \log 
       \bigg( 
       \sum_{\mathbf{u}'\in \mathcal{U}'_{\mathbf{u},\mathbf{v}} } Pr[\mathcal{X}_{\max}=\mathbf{u}']\bigg)  \Bigg) \\
    \nonumber 
    & 
    = 
    -\sum_{\mathbf{u}\in \Gamma^{t_i},  \mathbf{v}\in \Gamma^{s-t_o}} 
     Pr[\overline{\mathcal{X}_{\max}}=\mathbf{u}]\cdot 
     |\mathcal{U}'_{\mathbf{u},\mathbf{v}}|
     \cdot  
     \bigg(\sum_{\mathbf{u}'\in \mathcal{U}'_{\mathbf{u},\mathbf{v}}}
     \frac{1}{|\mathcal{U}'_{\mathbf{u},\mathbf{v}}| } Pr[\mathcal{X}_{\max}=\mathbf{u}'] \bigg)
    \log \bigg(\sum_{\mathbf{u}'\in \mathcal{U}'_{\mathbf{u},\mathbf{v}}}
    \frac{1}{|\mathcal{U}'_{\mathbf{u},\mathbf{v}}| }
    Pr[\mathcal{X}_{\max}=\mathbf{u}'] \bigg) \\
    \nonumber 
    & \qquad 
    - \sum_{\mathbf{u}\in \Gamma^{t_i},  \mathbf{v}\in \Gamma^{s-t_o}} 
     Pr[\overline{\mathcal{X}_{\max}}=\mathbf{u}]
     \bigg(\sum_{\mathbf{u}'\in \mathcal{U}'_{\mathbf{u},\mathbf{v}}}
     Pr[\mathcal{X}_{\max}=\mathbf{u}'] \bigg)
    \log \Big( |\mathcal{U}'_{\mathbf{u},\mathbf{v}}| \Big)\\
    \label{eq-proof-thm-main-2-use-Jensen-2}
    &
    \ge 
    -\sum_{\mathbf{u}\in \Gamma^{t_i},  \mathbf{v}\in \Gamma^{s-t_o}} 
     Pr[\overline{\mathcal{X}_{\max}}=\mathbf{u}]\cdot 
     |\mathcal{U}'_{\mathbf{u},\mathbf{v}}|
    \cdot  
     \bigg(\sum_{\mathbf{u}'\in \mathcal{U}'_{\mathbf{u},\mathbf{v}}}
     \frac{1}{|\mathcal{U}'_{\mathbf{u},\mathbf{v}}| } Pr[\mathcal{X}_{\max}=\mathbf{u}'] 
    \log \Big(
    Pr[\mathcal{X}_{\max}=\mathbf{u}'] \Big) \bigg) \\
    \nonumber 
    & \qquad 
    - (t_o-t_i)\log(v)\\
    \nonumber 
    & 
    = 
    -\sum_{\mathbf{u}\in \Gamma^{t_i}} 
     Pr[\overline{\mathcal{X}_{\max}}=\mathbf{u}]  
     \sum_{\mathbf{u}'\in \Gamma^{s-t_o}}
     Pr[\mathcal{X}_{\max}=\mathbf{u}'] 
     \log \Big(
     Pr[\mathcal{X}_{\max}=\mathbf{u}'] \Big)  
     - (t_o-t_i)\log(v)\\
    \label{eq-thm-main2-proof-2-last}
    & 
    = 
    H(\mathcal{X}_{\max}) - (t_o-t_i)\log(v)
\end{align}
where \eqref{eq-thm5.2-proof-use-CANOT-1} follows from the assumption of combinatorial asymmetric $(t_i,t_o,s,$ $v)$-AONT and \eqref{def-eq-U'}; 
\eqref{eq-proof-thm-main-2-use-Jensen-1} follows from Lemma~\ref{lemma-Jensen-ineq} with $\sum\limits_{\mathbf{u}\in \Gamma^{t_i}} 
Pr[\overline{\mathcal{X}_{\max}}=\mathbf{u}]=1$ and $\sum\limits_{\mathbf{u}'\in \mathcal{U}'_{\mathbf{u},\mathbf{v}} } Pr[\mathcal{X}_{\max}=\mathbf{u}'] \ge 0$; 
and \eqref{eq-proof-thm-main-2-use-Jensen-2} follows from Lemma~\ref{lemma-Jensen-ineq} with 
$\sum\limits_{\mathbf{u}'\in \mathcal{U}'_{\mathbf{u},\mathbf{v}}}
     \frac{1}{|\mathcal{U}'_{\mathbf{u},\mathbf{v}}| } = 1$ and  
$Pr[\mathcal{X}_{\max}=\mathbf{u}']\ge 0$.

Combining Lemma~\ref{lemma-Asymmetric-general} and \eqref{eq-thm-main2-proof-2-last} yields 
\begin{align*}
    H(\mathcal{X}|\mathcal{Y}) 
    &\le 
    \sum_{i\in [s]} H(X_i) - H(\mathcal{Y}) 
    \le 
    \sum_{i\in [s]} H(X_i) -  H(\mathcal{X}_{\max}) + (t_o-t_i)\log(v) \\
    &= 
    \min_{I\subseteq [s],\atop|I|=t_i}\, \sum_{i\in I} H(X_i) + (t_o-t_i)\log(v),
\end{align*}
as well as 
\begin{align*}
    H(\mathcal{X}|\mathcal{Y}) 
    &\le 
    (s+t_i-t_o)\log(v) - H(\mathcal{Y}) \\
    &\le 
    (s+t_i-t_o)\log(v) -  H(\mathcal{X}_{\max}) + (t_o-t_i)\log(v) \\
    &= 
    \min_{I\subseteq [s],\atop|I|=t_i}\, \sum_{i\in I} H(X_i) + s\cdot\log(v)-\sum^s_{i=1}H(X_i), 
\end{align*}
as desired. 
This completes the proof. 
\end{proof}

We remark that the lower and upper bounds of Theorem~\ref{thm-main-Asymmetric} coincide 
when all $s$ inputs are with the uniform distribution and accordingly the combinatorial asymmetric $(t_i,t_o,s,v)$-AONT has perfect security (see also \cite[Theorem 2.3]{ES2021+}). 
However, unlike the bounds in Theorem~\ref{thm-main-symmetric},  $H(\mathcal{X}|\mathcal{Y})$ for combinatorial  asymmetric $(t_i,t_o,s,v)$-AONTs  cannot be bounded above by $\min\limits_{\mathcal{X}'}H(\mathcal{X}')$ in general (see Example~\ref{example-asy}). 

\begin{example}\label{example-asy}
Consider the following $(1,2,3,3)$-AONT over the alphabet $\Gamma=\{a,b,c\}$ shown in Table~\ref{tableII}.\\

\begin{table}[h]
 \centering
  \begin{tabular}{|c|c|c||c|c|c|}
   \hline
   $X_1$ & $X_2$ & $X_3$ & $Y_1$ & $Y_2$ & $Y_3$\\
   \hline
   $a$ & $a$ & $a$ & $a$ & $a$ & $a$\\
   $a$ & $a$ & $b$ & $b$ & $b$ & $a$\\
   $a$ & $a$ & $c$ & $c$ & $c$ & $a$\\
   $a$ & $b$ & $a$ & $a$ & $b$ & $b$\\
   $a$ & $b$ & $b$ & $b$ & $c$ & $b$\\
   $a$ & $b$ & $c$ & $c$ & $a$ & $b$\\
   $a$ & $c$ & $a$ & $a$ & $c$ & $c$\\
   $a$ & $c$ & $b$ & $b$ & $a$ & $c$\\
   $a$ & $c$ & $c$ & $c$ & $b$ & $c$\\
   $b$ & $a$ & $a$ & $b$ & $a$ & $b$\\
   $b$ & $a$ & $b$ & $c$ & $b$ & $b$\\
   $b$ & $a$ & $c$ & $a$ & $c$ & $b$\\
   $b$ & $b$ & $a$ & $b$ & $b$ & $c$\\
   $b$ & $b$ & $b$ & $c$ & $c$ & $c$\\
   $b$ & $b$ & $c$ & $a$ & $a$ & $c$\\
   $b$ & $c$ & $a$ & $b$ & $c$ & $a$\\
   $b$ & $c$ & $b$ & $c$ & $a$ & $a$\\
   $b$ & $c$ & $c$ & $a$ & $b$ & $a$\\
   $c$ & $a$ & $a$ & $c$ & $a$ & $c$\\
   $c$ & $a$ & $b$ & $a$ & $b$ & $c$\\
   $c$ & $a$ & $c$ & $b$ & $c$ & $c$\\
   $c$ & $b$ & $a$ & $c$ & $b$ & $a$\\
   $c$ & $b$ & $b$ & $a$ & $c$ & $a$\\
   $c$ & $b$ & $c$ & $b$ & $a$ & $a$\\
   $c$ & $c$ & $a$ & $c$ & $c$ & $b$\\
   $c$ & $c$ & $b$ & $a$ & $a$ & $b$\\
   $c$ & $c$ & $c$ & $b$ & $b$ & $b$\\
   \hline
  \end{tabular}
  \vskip 0.3cm
  \caption{A combinatorial $(1,2,3,3)$-AONT over the alphabet $\{a, b, c\}$}
  \label{tableII}
\end{table}

Suppose that
\begin{align*}
    &Pr[X_1=a]=\frac{1}{6},\  Pr[X_1=b]=\frac{1}{3},\  Pr[X_1=c]=\frac{1}{2},\\
    &Pr[X_2=a]=\frac{1}{2},\  Pr[X_2=b]=\frac{1}{4},\  Pr[X_2=c]=\frac{1}{4},\\
    &Pr[X_3=a]=\frac{7}{10},\  Pr[X_3=b]=\frac{1}{5},\  Pr[X_3=c]=\frac{1}{10}
\end{align*}
which gives 
\begin{align*}
    H(X_1) = 1.459148,\   
    H(X_2) = 1.500000,\   
    H(X_3) = 1.156780. 
\end{align*}
Together with Table~\ref{tableII}, we obtain
\begin{align*}
    H(X_1|Y_1)=1.067794,\  H(X_1|Y_2)=1.459148,\  H(X_1|Y_3)=1.381719,\\
    H(X_2|Y_1)=1.500000,\  H(X_2|Y_2)=1.098856,\  H(X_2|Y_3)=1.381719,\\
    H(X_3|Y_1)=1.067794,\  H(X_3|Y_2)=1.098856,\  H(X_3|Y_3)=1.156780
\end{align*}
from which it is easy to see
    $H(X_i|Y_j) \neq H(X_{i}|Y_{j'} )$   for all $i, j, j'\in \{1, 2, 3\}$ with $j\neq j'$. 
It is readily seen that 
    \begin{align*}
         H(X_1|Y_2) > H(X_3) = \min \{H(X_1), H(X_2), H(X_3)\}.
    \end{align*}
\end{example}

\section{Asymmetric weak-AONT with independent inputs}
\label{sec-Asym-weak-AONT}

This section 
discusses the security properties of combinatorial asymmetric weak-AONT with independent inputs. 
We have the following theorem, which extends the discussions in the preceding Section~\ref{sec-Asym-AONT}.

\begin{theorem}\label{thm-main-Asymmetric-weak} 
Let array $A\in \Gamma^{v^s\times 2s}$ be a combinatorial asymmetric $(t_i,t_o,s,v)$-weak-AONT 
whose columns are with respect to random variables $X_1,\ldots,X_s,Y_1,$ $\ldots,Y_s$ respectively. Let $P_1,\ldots,P_s$ be the corresponding probability distributions of $X_1,\ldots,X_s$, which are  
mutually independent and take values from $\Gamma$. 
Then for any $t_i$ inputs $\mathcal{X}\subseteq \{X_1,\ldots,X_s\}$ such that $|\mathcal{X}|=t_i$ and any $s-t_o$ outputs $\mathcal{Y}\subseteq \{Y_1,\ldots,Y_s\}$ such that $|\mathcal{Y}|=s-t_o$, the followings hold.  
\begin{enumerate}
    \item[(1)] \begin{align}\label{eq-thm-Asy-main-1}
        H(\mathcal{X}|\mathcal{Y})  \ge \max\bigg\{0,\sum_{j=1}^s H(X_j) - (s-t_o)\log(v) -
        \log\big(v^{s-t_i}-v^{s-t_o}+1\big)
        \bigg\}. 
    \end{align}
    \item[(2)] 
    \begin{align}\label{eq-thm-Asy-main-2}
        H(\mathcal{X}|\mathcal{Y}) 
        \le \min\bigg\{H(\mathcal{X}) , \min_{I\subseteq [s],\atop|I|=t_i}\, \sum_{j\in I} H(X_j) + 
        \log\big(v^{s-t_i}-v^{s-t_o}+1\big)
        \bigg\}.
    \end{align}
\end{enumerate}
\end{theorem}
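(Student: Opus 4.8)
The plan is to run the two-step argument behind Theorem~\ref{thm-main-Asymmetric} — first extracting the combinatorial content of the (weaker) covering axioms, then feeding it into the same entropy computation — with the single exact identity $|\mathcal{U}'_{\mathbf{u},\mathbf{v}}| = v^{t_o-t_i}$ replaced by a two-sided estimate. Throughout I would keep $\overline{\mathcal{X}} = \{X_1,\ldots,X_s\}\setminus\mathcal{X}$ and, for $\mathbf{u}\in\Gamma^{t_i}$, $\mathbf{v}\in\Gamma^{s-t_o}$, the set $\mathcal{U}'_{\mathbf{u},\mathbf{v}}$ of \eqref{def-eq-U'}. The first step is to note that, because the array has exactly $v^s$ rows, the covering properties (1) and (2) of Definition~\ref{def-Asy-combinatorial-weak-AONT} in fact force columns $\{1,\ldots,s\}$ (resp.\ $\{s+1,\ldots,2s\}$) to contain every $s$-tuple \emph{exactly} once; hence rows are in bijection with input $s$-tuples, and $|\mathcal{U}'_{\mathbf{u},\mathbf{v}}|$ equals the number of rows having $\mathcal{X}=\mathbf{u}$ and $\mathcal{Y}=\mathbf{v}$. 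Covering property (3) then says precisely that $|\mathcal{U}'_{\mathbf{u},\mathbf{v}}|\ge 1$ for all $\mathbf{u},\mathbf{v}$; fixing $\mathbf{u}$, the $v^{s-t_i}$ rows with $\mathcal{X}=\mathbf{u}$ split over the $v^{s-t_o}$ values of $\mathbf{v}$, each class non-empty, so that
\[
1\;\le\;|\mathcal{U}'_{\mathbf{u},\mathbf{v}}|\;\le\;v^{s-t_i}-v^{s-t_o}+1 .
\]
(When $t_i=t_o$ both bounds equal $1$; under the unbiasedness of Definition~\ref{def-Asy-combinatorial-AONT} one recovers $|\mathcal{U}'_{\mathbf{u},\mathbf{v}}|=v^{t_o-t_i}$. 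This is the only place the axioms enter.)

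Next I would establish the weak-AONT counterpart of Lemma~\ref{lemma-Asymmetric-general}, namely
\[
\sum_{i\in[s]}H(X_i)-\log\!\big(v^{s-t_i}-v^{s-t_o}+1\big)-H(\mathcal{Y})\;\le\;H(\mathcal{X}\mid\mathcal{Y})\;\le\;\sum_{i\in[s]}H(X_i)-H(\mathcal{Y}).
\]
The derivation copies that of Lemma~\ref{lemma-Asymmetric-general} line for line: start from the expansion \eqref{eq-y-to-x-bar-Asy} of $H(\mathcal{X},\mathcal{Y})$, split it into the pieces \eqref{eq-thm-main2-lemma-divide-2cases-1}--\eqref{eq-thm-main2-lemma-divide-2cases-2}, invoke mutual independence of the $X_i$, and apply Lemma~\ref{lemma-Jensen-ineq} with weights $1/|\mathcal{U}'_{\mathbf{u},\mathbf{v}}|$; the only change is that the term $\log|\mathcal{U}'_{\mathbf{u},\mathbf{v}}|$ — previously $(t_o-t_i)\log v$ exactly — is now bounded above by $\log(v^{s-t_i}-v^{s-t_o}+1)$, which yields the left inequality. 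The right inequality uses only monotonicity of $\log$ (so that $-(\sum a)\log(\sum a)\le -\sum a\log a$) together with the fact that, for each fixed $\mathbf{u}$, $\bigcup_{\mathbf{v}}\mathcal{U}'_{\mathbf{u},\mathbf{v}}=\Gamma^{s-t_i}$ as a disjoint union (a consequence of the bijectivity on columns $\{1,\ldots,s\}$), and is identical to the corresponding step of Lemma~\ref{lemma-Asymmetric-general}. The log-sum estimate of Lemma~\ref{lemma-log-sum} is not needed here, since no term of that shape occurs in the statement.

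Part (1) then follows at once: bound $H(\mathcal{Y})\le(s-t_o)\log v$ since $\mathcal{Y}$ has $s-t_o$ coordinates of entropy at most $\log v$ each, substitute into the left inequality above, and combine with $H(\mathcal{X}\mid\mathcal{Y})\ge 0$. For part (2), besides the trivial $H(\mathcal{X}\mid\mathcal{Y})\le H(\mathcal{X})$, I would reproduce the lower estimate of $H(\mathcal{Y})$ from the proof of Theorem~\ref{thm-main-Asymmetric}(2): let $\mathcal{X}_{\max}$ collect the $s-t_i$ inputs of largest entropy and $\overline{\mathcal{X}_{\max}}$ the remaining $t_i$; write $Pr[\mathcal{Y}=\mathbf{v}] = \sum_{\mathbf{u}}Pr[\overline{\mathcal{X}_{\max}}=\mathbf{u}]\sum_{\mathbf{u}'\in\mathcal{U}'_{\mathbf{u},\mathbf{v}}}Pr[\mathcal{X}_{\max}=\mathbf{u}']$, apply Lemma~\ref{lemma-Jensen-ineq} twice exactly as in \eqref{eq-proof-thm-main-2-use-Jensen-1}--\eqref{eq-proof-thm-main-2-use-Jensen-2}, and bound $\log|\mathcal{U}'_{\mathbf{u},\mathbf{v}}|\le\log(v^{s-t_i}-v^{s-t_o}+1)$ to obtain $H(\mathcal{Y})\ge H(\mathcal{X}_{\max})-\log(v^{s-t_i}-v^{s-t_o}+1)$. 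Plugging this into the right inequality of the lemma and using $\sum_iH(X_i)-H(\mathcal{X}_{\max})=\min_{|I|=t_i}\sum_{i\in I}H(X_i)$ (mutual independence) gives the second term of the claimed upper bound.

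The routine parts are genuinely routine — essentially the earlier proofs verbatim. The point I would be most careful about is the cardinality step: one must count along the $\mathcal{X}=\mathbf{u}$ fibre (not the $\mathcal{Y}=\mathbf{v}$ fibre) to land on the exponents $v^{s-t_i}-v^{s-t_o}+1$ that actually appear in the statement, and one must check that the $\mathcal{U}'$-construction still behaves correctly when its two ``halves'' are taken to be $\overline{\mathcal{X}_{\max}}$ and $\mathcal{X}_{\max}$ rather than $\mathcal{X}$ and $\overline{\mathcal{X}}$ — both halves keep the right sizes ($t_i$ and $s-t_i$), so $\overline{\mathcal{X}_{\max}}\cup\mathcal{Y}$ is still a type-(3) column set and the covering bound applies unchanged. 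No bound tighter than $\log(v^{s-t_i}-v^{s-t_o}+1)$ can be expected in general, since a weak-AONT is a strictly weaker object than the AONT of Theorem~\ref{thm-main-Asymmetric}; the two bounds do collapse to those of Theorem~\ref{thm-main-symmetric} when $t_i=t_o$.
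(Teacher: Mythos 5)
Your proposal is correct and follows essentially the same route as the paper: establish the covering-based estimate $1\le |\mathcal{U}'_{\mathbf{u},\mathbf{v}}|\le v^{s-t_i}-v^{s-t_o}+1$ and then rerun the proofs of Lemma~\ref{lemma-Asymmetric-general} and Theorem~\ref{thm-main-Asymmetric} with this two-sided bound in place of the exact fibre size $v^{t_o-t_i}$, which is precisely how Lemma~\ref{lemma-Asymmetric-general-weak} and Theorem~\ref{thm-main-Asymmetric-weak} are proved in the paper. Your explicit derivation of the fibre bound (rows in bijection with input $s$-tuples, nonempty $\mathcal{Y}$-fibres over the $\mathcal{X}=\mathbf{u}$ slice) and your check that the $\overline{\mathcal{X}_{\max}}$/$\mathcal{X}_{\max}$ splitting is still a type-(3) column set only fill in details the paper leaves implicit.
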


To prove Theorem~\ref{thm-main-Asymmetric-weak}, we will make use of the following lemma. 

\begin{lemma}\label{lemma-Asymmetric-general-weak}
Under the assumption of Theorem~\ref{thm-main-Asymmetric-weak}, for any $t_i$ inputs $\mathcal{X}\subseteq \{X_1,\ldots,$ $X_s\}$ such that $|\mathcal{X}|=t_i$ and any $s-t_o$ outputs $\mathcal{Y}\subseteq \{Y_1,\ldots,Y_s\}$ such that $|\mathcal{Y}|=s-t_o$, we have 
\begin{align}\label{lemma-general-ineq-asy-weak}
     \sum_{j\in [s]} H(X_j) - \log\big(v^{s-t_i}-v^{s-t_o}+1\big)- H(\mathcal{Y}) \le H(\mathcal{X}|\mathcal{Y}) \le \sum_{j\in [s]} H(X_j) - H(\mathcal{Y}).
\end{align}
\end{lemma}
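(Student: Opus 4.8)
The plan is to adapt the proof of Lemma~\ref{lemma-Asymmetric-general} to the weaker covering hypothesis, where the key difference is that the fiber sizes $|\mathcal{U}'_{\mathbf{u},\mathbf{v}}|$ are no longer constant. Writing $\overline{\mathcal{X}} = \{X_1,\ldots,X_s\}\setminus\mathcal{X}$ and $\mathcal{U}'_{\mathbf{u},\mathbf{v}} := \{\mathbf{u}'\in\Gamma^{s-t_i} : \exists\text{ a row with } \mathcal{X}=\mathbf{u}, \overline{\mathcal{X}}=\mathbf{u}', \mathcal{Y}=\mathbf{v}\}$ exactly as before, I would first establish the replacement for \eqref{eq-U'-size}: the covering property on $\{1,\ldots,s\}$ and $\{s+1,\ldots,2s\}$ guarantees the array has exactly $v^s$ rows that hit every input $s$-tuple exactly once, so each fiber $\mathcal{U}'_{\mathbf{u},\mathbf{v}}$ has size at least $1$ (covering on the relevant index set $I\cup J$); for the upper bound on the fiber size, a counting argument over the $v^s$ distinct rows restricted to a fixed $\mathcal{X}=\mathbf{u}$ and a fixed $\mathcal{Y}=\mathbf{v}$ shows $1\le |\mathcal{U}'_{\mathbf{u},\mathbf{v}}|\le v^{s-t_i}-v^{s-t_o}+1$, since the $v^{s-t_i}$ rows with $\mathcal{X}=\mathbf{u}$ must also realize every value of $\mathcal{Y}$ restricted appropriately — the $v^{s-t_o}$ distinct values of the remaining outputs in $\mathcal{Y}$'s complement-within-outputs force at least $v^{s-t_o}-1$ of those rows away from $\mathcal{Y}=\mathbf{v}$.

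For the \textbf{upper bound} $H(\mathcal{X}|\mathcal{Y})\le\sum_j H(X_j)-H(\mathcal{Y})$, I would reuse verbatim the argument from Lemma~\ref{lemma-Asymmetric-general}: by \eqref{eq-y-to-x-bar-Asy} (which only uses the covering/uniqueness structure, not constancy of fiber size) and monotonicity of $\log$, the grouping inequality $\big(\sum a_i\big)\log\big(\sum a_i\big)\ge\sum a_i\log a_i$ gives $H(\mathcal{X},\mathcal{Y})\le H(\mathcal{X},\overline{\mathcal{X}}) = \sum_j H(X_j)$ by mutual independence, hence the claim. This half requires no new ideas.

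For the \textbf{lower bound}, the main obstacle is handling the non-constant fiber sizes in the term corresponding to \eqref{eq-thm-main2-lemma-divide-2cases-1}. I would split $H(\mathcal{X},\mathcal{Y})$ via independence into the two pieces as in \eqref{eq-thm-main2-lemma-divide-2cases-1}--\eqref{eq-thm-main2-lemma-divide-2cases-2}; the second piece still equals $H(\mathcal{X})$ since $\sum_{\mathbf{v},\mathbf{u}'\in\mathcal{U}'_{\mathbf{u},\mathbf{v}}}Pr[\overline{\mathcal{X}}=\mathbf{u}'] = \sum_{\mathbf{u}'\in\Gamma^{s-t_i}}Pr[\overline{\mathcal{X}}=\mathbf{u}']=1$ (each $\mathbf{u}'$ appears for exactly one $\mathbf{v}$). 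For the first piece, I apply Jensen (Lemma~\ref{lemma-Jensen-ineq}) with weights $1/|\mathcal{U}'_{\mathbf{u},\mathbf{v}}|$ exactly as in \eqref{eq-lemma-first-case-subcase-1}, which produces a leftover term $-\sum_{\mathbf{u},\mathbf{v}}Pr[\mathcal{X}=\mathbf{u}]\big(\sum_{\mathbf{u}'\in\mathcal{U}'_{\mathbf{u},\mathbf{v}}}Pr[\overline{\mathcal{X}}=\mathbf{u}']\big)\log|\mathcal{U}'_{\mathbf{u},\mathbf{v}}|$; bounding $|\mathcal{U}'_{\mathbf{u},\mathbf{v}}|\le v^{s-t_i}-v^{s-t_o}+1$ and summing out gives a loss of exactly $\log\big(v^{s-t_i}-v^{s-t_o}+1\big)$, while the main term collapses to $H(\overline{\mathcal{X}})$. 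Assembling, $H(\mathcal{X},\mathcal{Y})\ge H(\mathcal{X})+H(\overline{\mathcal{X}}) - \log\big(v^{s-t_i}-v^{s-t_o}+1\big) = \sum_j H(X_j) - \log\big(v^{s-t_i}-v^{s-t_o}+1\big)$, and subtracting $H(\mathcal{Y})$ yields \eqref{lemma-general-ineq-asy-weak}. I expect the delicate point to be the clean justification of the fiber-size bound $|\mathcal{U}'_{\mathbf{u},\mathbf{v}}|\le v^{s-t_i}-v^{s-t_o}+1$ from the covering definition, since unlike the unbiased case one cannot simply divide $N$ by a subtuple count.
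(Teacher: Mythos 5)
Your proposal is correct and follows essentially the same route as the paper: the paper's proof simply reruns the argument of Lemma~\ref{lemma-Asymmetric-general}, replacing the exact fiber size $|\mathcal{U}'_{\mathbf{u},\mathbf{v}}|=v^{t_o-t_i}$ by the covering-property bound $1\le|\mathcal{U}'_{\mathbf{u},\mathbf{v}}|\le v^{s-t_i}-v^{s-t_o}+1$, exactly as you do (you in fact justify that fiber bound in more detail than the paper, which merely asserts it). One small wording slip: the $v^{s-t_o}$ values that must all appear among the $v^{s-t_i}$ rows with $\mathcal{X}=\mathbf{u}$ are the values of $\mathcal{Y}$ itself (the $s-t_o$ retained output coordinates), not of ``$\mathcal{Y}$'s complement-within-outputs,'' though your count and conclusion are right.
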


\begin{proof}
The proof can be done by following the same line as the arguments for Lemma~\ref{lemma-Asymmetric-general}. 
The only difference is as follows. 
Recall from \eqref{def-eq-U'} that 
\begin{align*}
    \mathcal{U}'_{\mathbf{u},\mathbf{v}} 
    = \{\mathbf{u}'\in \Gamma^{s-t_i}: 
    \exists\ \text{a row in array $A$ such that}\ \mathcal{X}=\mathbf{u}, \overline{\mathcal{X}}=\mathbf{u}', \mathcal{Y}=\mathbf{v} \}
\end{align*} 
for any $\mathbf{u}\in \Gamma^{t_i},  \mathbf{v}\in \Gamma^{s-t_o}$, and it holds that $|\mathcal{U}'_{\mathbf{u},\mathbf{v}} |=v^{t_0-t_i}$ for a combinatorial asymmetric $(t_i,t_o,s,v)$-AONT due to its unbiased property. 
However, for a combinatorial asymmetric $(t_i,t_o,s,v)$-weak-AONT, we only have 
\begin{align}\label{eq-size-U'-asy-weak} 
  1\le |\mathcal{U}'_{\mathbf{u},\mathbf{v}} | \le v^{s-t_i}-v^{s-t_o} +1
\end{align}
according to its covering property. 
Replacing the quantization on $ |\mathcal{U}'_{\mathbf{u},\mathbf{v}} |$ in the proof of Lemma~\ref{lemma-Asymmetric-general} by the above estimation \eqref{eq-size-U'-asy-weak}, it is not hard to derive the inequality \eqref{lemma-general-ineq-asy-weak}, and hence the lemma follows. 
\end{proof}


\begin{proof}[Proof of Theorem~\ref{thm-main-Asymmetric-weak}]
Based on Lemma~\ref{lemma-Asymmetric-general-weak}, the proof follows the same line as the argument of Theorem~\ref{thm-main-Asymmetric} by modifying the quantization of $ |\mathcal{U}'_{\mathbf{u},\mathbf{v}} |$  by the inequality \eqref{eq-size-U'-asy-weak}.  
\end{proof}

We would remark that the bounds in Theorem~\ref{thm-main-Asymmetric-weak} could be tight in special cases, say when $t_i=t_o$ and all the inputs are with uniform distribution. However, it is not always tight, see Example~\ref{example-asy-weak-4} below. 
Also in contrast to Theorem~\ref{thm-main-symmetric}, Example~\ref{example-asy-weak-4} shows that neither the relation $H(\mathcal{X}|\mathcal{Y})=H(\mathcal{X}|\mathcal{Y}')$ nor $H(\mathcal{X}|\mathcal{Y})=H(\mathcal{X}'|\mathcal{Y})$ holds in general for any distinct $\mathcal{X},\mathcal{X}'$ and any distinct $\mathcal{Y},\mathcal{Y}'$.

\begin{example}\label{example-asy-weak-4}\rm 
Consider the following $(1,2,3, 2)$-weak-AONT over the alphabet $\Gamma=\{a, b\}$ shown in Table~\ref{table-example-asy-weak} as in \cite{ES2021+asy}. 

\begin{table}[h]
 \centering
  \begin{tabular}{|c|c|c||c|c|c|}
  \hline
  $X_1$ & $X_2$ & $X_3$ & $Y_1$ & $Y_2$ & $Y_3$\\
  \hline
  $a$ & $a$ & $a$ & $a$ & $a$ & $a$\\
  $a$ & $a$ & $b$ & $b$ & $b$ & $a$\\
  $a$ & $b$ & $a$ & $b$ & $a$ & $b$\\
  $a$ & $b$ & $b$ & $b$ & $a$ & $a$\\
  $b$ & $a$ & $a$ & $a$ & $b$ & $b$\\
  $b$ & $a$ & $b$ & $a$ & $b$ & $a$\\
  $b$ & $b$ & $a$ & $a$ & $a$ & $b$\\
  $b$ & $b$ & $b$ & $b$ & $b$ & $b$\\
  \hline
  \end{tabular}
  \vskip 0.3cm
  \caption{A $(1,2,3,2)$-weak-AONT over the alphabet $\{a, b\}$}
  \label{table-example-asy-weak}
\end{table}
Suppose that
\begin{align*}
    Pr[X_1=a]=\frac{1}{4},\  Pr[X_1=b]=\frac{3}{4},\\  Pr[X_2=a]=\frac{1}{3},\  Pr[X_2=b]=\frac{2}{3},\\
    Pr[X_3=a]=\frac{1}{2},\  Pr[X_3=b]=\frac{1}{2}.
\end{align*}
Plugging into Table~\ref{table-example-asy-weak} gives 
\begin{align*}
    Pr[Y_1=a]=\frac{13}{24},\  Pr[Y_1=b]=\frac{11}{24},\\  Pr[Y_2=a]=\frac{11}{24},\
    Pr[Y_2=b]=\frac{13}{24},\\  Pr[Y_3=a]=\frac{7}{24},\  Pr[Y_3=b]=\frac{17}{24}.
\end{align*}
Then by the definition, it is easy to calculate 
\begin{align*}
    H(X_1)=0.811278, \ H(X_2)=0.918296,\ H(X_3)=1.00000,\\ 
    H(Y_1)=0.994985, \ H(Y_2)=0.994985,\ H(Y_3)=0.870864.
\end{align*}
Furthermore 
\begin{align*}
    H(X_1|Y_1)=0.667521, \ H(X_1|Y_2)=0.667521, \ H(X_1|Y_3)=0.657504,\\
    H(X_2|Y_1)=0.740788, \ H(X_2|Y_2)=0.740788, \ H(X_2|Y_3)=0.727952,\\
    H(X_3|Y_1)=0.735665, \ H(X_3|Y_2)=0.735665, \  H(X_3|Y_3)=0.836044, 
\end{align*}
from which it is easy to see $H(X_1|Y_1)=H(X_1|Y_2) \neq H(X_1|Y_3)$ and
$H(X_1|Y_1) \neq H(X_2|Y_1)$. 
\end{example}


\section{Conclusion}
\label{sec-conclusion} 
In this paper, we initially investigated the  
security properties sandwiched between perfect security and weak security for combinatorial 
AONT and combinatorial asymmetric AONT in the scenarios that 
all the $s$ inputs take values independently but not necessarily identically and the even less restrictive model allowing partial dependency.  
By using information-theoretic techniques, we established general lower and upper bounds on the amount of information $H(\mathcal{X}|\mathcal{Y})$ about any $t_i$ inputs $\mathcal{X}$ that is not revealed by any $s-t_o$ outputs $\mathcal{Y}$.  
It is also proven that the derived bounds could be attained in certain cases. 
However the security properties of combinatorial (asymmetric) AONT are still unknown for many non-independent and non-identical (prior) probability distributions on the inputs, which is indeed of interest and worth investigating in the future work.
In addition,  
to investigate the information-theoretic security properties of \textit{linear} AONTs~\cite{Stinson2001}, in which each of the outputs is a linear combination of inputs and could be computed efficiently, with some prior input distributions
is an interesting direction as well.

\section*{Acknowledgment}
The authors would like to thank Professor Douglas R. Stinson for reading an early version of this manuscript.


\end{document}